%
%
%
%
\documentclass[prodmode,acmtsas]{acmsmall} 
\usepackage{multirow}
\usepackage{graphicx}
\usepackage{clrscode3e}
\usepackage{algorithmic}
\usepackage{cite}
\usepackage{amssymb}
\usepackage{amsmath}
\usepackage{ctable}

\usepackage{wrapfig,lipsum}
\usepackage{epsfig}
\newcommand{\argmax}{\arg\!\max}

\newcommand{\alg}{{\sc TopCom}}
\usepackage{algorithm}

\usepackage{subcaption}
\captionsetup{compatibility=false}





\begin{document}
\markboth{V. S. Dave et al.}{TopCom}

\title{TopCom: Index for Shortest Distance Query in Directed Graph}
\author{VACHIK S. DAVE
\affil{Indiana University Purdue University, Indianapolis}
MOHAMMAD AL HASAN
\affil{Indiana University Purdue University, Indianapolis}
}


\begin{abstract}
Finding shortest distance between two vertices in a graph is an important
problem due to its numerous applications in diverse domains, including
geo-spatial databases, social network analysis, and information retrieval.
Classical algorithms (such as, Dijkstra) solve this problem in polynomial time,
but these algorithms cannot provide real-time response for a large number of
bursty queries on a large graph. So, indexing based solutions that pre-process
the graph for efficiently answering (exactly or approximately) a large number of
distance queries in real-time is becoming increasingly popular. Existing
solutions have varying performance in terms of index size, index building time,
query time, and accuracy.  In this work, we propose \alg, a novel
indexing-based solution for exactly answering distance queries. Our experiments
with two of the existing state-of-the-art methods (IS-Label and TreeMap) show
the superiority of \alg\ over these two methods considering scalability and
query time. Besides, indexing of \alg\ exploits the DAG (directed acyclic
graph) structure in the graph, which makes it significantly faster than the
existing methods if the SCCs (strongly connected component) of the input graph
are relatively small.
\end{abstract}

%
%
\begin{CCSXML}
<ccs2012>
<concept>
<concept_id>10002951.10003317.10003325</concept_id>
<concept_desc>Information systems~Information retrieval query processing</concept_desc>
<concept_significance>500</concept_significance>
</concept>
<concept>
<concept_id>10002951.10002952.10002971</concept_id>
<concept_desc>Information systems~Data structures</concept_desc>
<concept_significance>300</concept_significance>
</concept>
<concept>
<concept_id>10002951.10003152</concept_id>
<concept_desc>Information systems~Information storage systems</concept_desc>
<concept_significance>100</concept_significance>
</concept>
</ccs2012>
\end{CCSXML}

\ccsdesc[500]{Information systems~Information retrieval query processing}
\ccsdesc[300]{Information systems~Data structures}
\ccsdesc[100]{Information systems~Information storage systems}

%
%

\terms{Graph Algorithms, Performance}

\keywords{Shortest Distance Query, Indexing method for Distance Query, Directed Acyclic Graph}


\begin{bottomstuff}
Author's addresses: V. S. Dave, Computer \& Information Science Department,
Indian University Purdue University, Indianapolis; M. Al Hasan,
Computer \& Information Science Department,
Indian University Purdue University, Indianapolis.
\end{bottomstuff}

\maketitle

\section{Introduction}

\label{sec:intro}

Finding shortest distance between two nodes in a graph (\textit{distance
query}) is one of the most useful operations in graph analysis. Besides the
application that stands for its literal meaning, i.e.\ finding the shortest
distance between two places in a road network, this operation is useful in many
other applications in social and information networks. For instance, in social
networks, the shortest path distance is used in the calculation of different
centrality metrics, including closeness centrality and betweenness
centrality~\cite{RCC:Preparata:2008,IAN:Erdem:2013}.
It is also used as a criterion for finding highly influential
nodes~\cite{MSI:Kempe:2003}, and for detecting communities in a
network~\cite{GFL:Backstrom:2006}. Scientists have also used shortest path
distance to generate features for predicting future links in a
network~\cite{ASL:Hasan:2011}. In information networks, shortest path distance
is used for keyword search~\cite{KSG:Kargar:2011}, and also for relevance
ranking~\cite{SWC:Ukkonen:2008}.

Due to the importance of the shortest path distance problem, researchers have been
studying this problem from the ancient time, and several classical algorithms
(Dijkstra, Bellman-Ford, Floyd-Warshall) exist for this problem, which run in
polynomial time over the number of vertices and the number of edges of the
network. However, as real-life graphs grow in the order of thousands or
millions of vertices, classical algorithms deem inefficient for providing
real-time answers for a large number of distance queries on such graphs. For
example, for a graph of a few thousand vertices, a contemporary desktop
computer takes an order of seconds to answer a single query, so thousands of
queries take tens of minutes, which is not acceptable for many real-time
applications. So, there is a growing interest for the discovery of more
efficient methods for solving this task.

Various approaches are considered for obtaining an efficient distance query method
for large graphs. One of them is to exploit topological properties of real-life
networks that adhere to some specific characteristics. For instance, many
researchers exploit the spatial and planar properties of road
networks~\cite{KSP:Tao:2011,HLA:Abraham:2011,FDP:Yan:2013} to obtain efficient
solutions for distance queries in road networks.  However, for a general
network from any other domain, such methods perform
poorly~\cite{HHL:Abraham:2012}. The second approach is to perform
pre-processing on the host graph and build an index data structure which can be
used at runtime to answer the distance query between an arbitrary pair of nodes
more efficiently. Several indexing ideas are used, but two are the most common,
landmark-based
indexing~\cite{FFD:Tretyakov:2011,ASD:Qiao:2014,FSP:Potamias:2009,FES:Akiba:2013}
and 2-hop-cover indexing~\cite{2Hop:Cohen:2002}. Methods adopting the former
idea identify a set of landmark nodes and pre-compute all-single source
shortest paths from these landmark nodes. During query time, distances between
a pair of arbitrary nodes are answered from their distances to their respective
closest landmark nodes. Most of these methods deliver an approximation of
shortest path distance except a method presented in ~\cite{FES:Akiba:2013}.
Methods adopting the two-hop cover indexing generally find the exact solution
for a distance query~\cite{AHC:Jin:2012,HDL:Jiang:2014,ISL:Fu:2013}. These
methods store a collection of hops (paths starting from that node), such that
the shortest path between a pair of arbitrary vertices can be obtained from the
intersection of the hops of those vertices.

A related work to the shortest path problem is the reachability problem. Given
a directed graph $G(V, E)$, and a pair of vertices $u$ and $v$, the
reachability problem answers whether a path exists from $u$ to $v$. This
problem can be solved in $O(|V|+|E|)$ time using graph traversal, where $V$ is
the set of vertices and $E$ is the set of edges. However, using a reachability
index, a better runtime can be obtained in practice. All the existing
solutions~\cite{GRAIL:Hilmi:2012,RQL:Zhu:2014} of the reachability problem
solve it for a directed acyclic graph (DAG). This is due to the fact that any
directed graph can be converted to a DAG such that a DAG node is a strongly
connected component (SCC) of the original graph; since any nodes in an SCC is
reachable to each other, the reachability solution in the DAG easily answers
a reachability query in the original graph. The indexing idea that we
propose in this work also exploits the SCC, but unlike existing works we solve
the distance query problem instead of reachability.

In this work, we propose \alg~\footnote{\alg\ stands for {\bf Top}ological {\bf
Com}pression which is the fundamental operation that is used to create the
index data structure of this method.}, an indexing based method for obtaining
exact solution of a distance query in an arbitrary directed graph. In principle,
\alg\ uses a 2-hop-cover solution, but its indexing is different from other
existing indexing methods. Specifically, the basic indexing scheme of \alg\ is
designed for a DAG and it is inspired from the indexing solution of the
reachability queries proposed in~\cite{TFL:Cheng:2013}. Due to its design,
\alg\ exhibits a very attractive performance for a DAG or general graph
in which SCCs are relatively small. However, we also extend the basic indexing
scheme so that it also solves the distance query for an arbitrary directed graph. We
show experiment results that validate \alg's superior performance over IS-Label
\cite{ISL:Fu:2013} and TreeMap~\cite{AED:Xiang:2014} which are two of the fastest known
indexing based shortest path methods in the recent years. Following other recent works, 
we also compare our method with bi-directional Dijkstra, which is a well-accepted baseline 
method for distance query solutions in a directed graph. Note that, this journal
article is an extended version of a published conference article~\cite{TIS:dave:2015};
the conference article works for DAG only, but this work solves distance query
indexing for arbitrary directed graphs.

\section{Related works}
\label{sec:related-works}
Shortest distance on a graph has many interesting recent and earlier works.  In
the section we discuss the most important works among these under two 
categories: (1) Online shortest distance calculation, (2) Offline
(Index based) shortest distance calculation.

\subsection{Online shortest distance calculation:}
\label{sec:online-methods}

For unweighted graph, the simplest online method to find shortest distance is
Breadth First Search (BFS) with time complexity $O(|V| + |E|)$, where $V$ is number
of vertices and $E$ is number of edges of the graph. For weighted graph, most
well-known single source shortest distance algorithm is Dijkstra's algorithm,
which computes shortest distance for weighted graph with positive weights.
Using a binary heap based priority queue, the complexity of Dijkstra's
algorithm is $O(|E| \lg |V|)$ and the same using a Fibonacci heap is $O(|E|+|V| \lg
|V|)$.  Another well known algorithm for single source shortest path is the
Bellman-Ford algorithm~\cite{ORP:bellman:1958,CSRL:2001} with time complexity
$O(|V|\cdot|E|)$, which is generally slow for large graphs with millions of nodes and
edges. 

There are methods proposed by different researchers to improve the above
classical shortest distance methods~\cite{CHG:Bauer:2010,SUT:Wanger:2007}.
Although, they do not improve the worst case complexity of the shortest path
algorithm, they do exhibit good average-case behavior. The most popular among
these methods is Bidirectional Dijkstra~\cite{IBH:Sint:1977}, which is particularly
applicable when the objective is to obtain the shortest distance between a pair
of vertices. The computational complexity of bidirectional search can be
denoted as $O(b^{d/2})$, where $b$ is the branching factor and $d$ is the
distance from start node to target node. Real life networks have small value of
$d$ (typically smaller than 10)---a fact that makes this algorithm an attractive
choice for many applications. In this work, bidirectional Dijkstra is one of the
methods with which we compare our proposed solution.

\subsection{Offline (Index based) shortest distance  calculation:}
\label{sec:offline-methods} 

For large graphs, online methods are slower than an indexing based method, so
most of the recent research efforts are concentrated towards indexing based
methods. The literature  for shortest distance indexing is quite vast,
so, we review few of the works that have published in the recent years. For a
detailed review, we refer the readers to read ~\cite{EAD:Zwick:2001,SPQ:Sommer:2014}. 


Many of the existing works for shortest distance computation is specifically
designed for the {\bf road 
networks}~\cite{FDP:Yan:2013,GIR:Rice:2010,KSP:Tao:2011,SPD:Zhu2013,HLA:Abraham:2011,CHF:Geisberger:2008,HHH:Sanders:2005,EPC:Jung:2002}. 
Such networks show hierarchical structures with the presence of
junctions, hubs, and highways; the shortest distance computation methods for
these networks exploit the hierarchical structure for compressing distance
matrix or for building distance indices~\cite{GIR:Rice:2010,CHF:Geisberger:2008}. 
For example, Sanders et al.~\cite{HHH:Sanders:2005} use
highway hierarchy and design an exact shortest distance computation method that
runs faster than a method that does not use the hierarchy structure. Zhu
et al.~\cite{SPD:Zhu2013} design a hierarchy based indexing and prove that the
results on real-life graphs is close to the theoretical complexity of the
proposed method. Jung et al.~\cite{EPC:Jung:2002} design an efficient shortest
path computation method for hierarchically structured topographical road maps.
Abraham et al.~\cite{HLA:Abraham:2011}  have proposed an efficient hub-based
labeling (HL) method to answer shortest path distance query on road networks.
Tao et al.~\cite{KSP:Tao:2011} explore the spatial property and find k-skip
graph which can answer k-skip shortest path i.e. path created from the original
shortest path by skipping at-most k consecutive nodes. Recently Yan
et al.~\cite{FDP:Yan:2013} propose a method to find the distance preserving
sub-graphs to answer a shortest distance query more efficiently. However, most
of the indexing schemes for the road networks are based on some specific
property of the road networks and they are ineffective for general graphs that
do not satisfy those properties of road networks~\cite{HHL:Abraham:2012}.

Finding exact shortest distance in a large graph is a costly task, hence few
researchers have proposed methods for computing \textbf{estimated shortest distance}
~\cite{FFD:Tretyakov:2011,FAE:Gubichev:2010,ASD:Qiao:2014,FSP:Potamias:2009}. 
The most common among the estimated shortest distance based
methods is the landmark based method, which selects a set of landmark nodes
based on some criteria and finds shortest paths that must go through those
landmark nodes.  The main task here is to decide the set of vertices that are
optimal choice as landmarks. However, it has been shown that this optimization
problem is $\cal{NP}$-Hard~\cite{FSP:Potamias:2009}, so researchers adopt
various heuristics based approaches for choosing those landmarks.  Potamias et
al.~\cite{FSP:Potamias:2009} compare centrality and degree based approaches for
selecting landmarks. Gubichev et al.~\cite{FAE:Gubichev:2010} propose a sketch
based indexing method for estimating answer of a shortest distance query.
Treyakov et al.~\cite{FFD:Tretyakov:2011} propose a landmark based fully
dynamic approximation method using shortest path tree and also obtain an
improved set of vertices as landmark; they show that their method has less
approximation error than other landmark based approaches. Qiao et
al.~\cite{ASD:Qiao:2014} propose a query based local landmark method which
selects landmark nodes that are local to the query in the sense that the
obtained shortest path is the closest to the real shortest path as much as
possible; this method also improves the estimation accuracy. Our proposed
indexing method, \alg, is not comparable to these methods,
because unlike these methods, our method provides exact shortest path distance.

There are some other works for finding shortest distance in large graphs which
are proposed very recently; examples include~\cite{ISL:Fu:2013,ESS:Zhu:2013,AHC:Jin:2012,OLE:Cheng:2009,HHL:Abraham:2012,AED:Xiang:2014,FES:Akiba:2013,RAS:Gao:2011,TES:Wei:2010,EPD:Cheng:2012}. Many of these have
unique ideas, so it is difficult to categorize them under a generic shortest
path method. For example, Gao et al.~\cite{RAS:Gao:2011} use a relational
approach and propose an index called SegTable which stores local segments of a
shortest distance. Zhu at al.~\cite{ESS:Zhu:2013} propose a method to answer
single source shortest distance query for a huge graph on disk. Akiba et 
al.~\cite{FES:Akiba:2013}~\footnote{This work cannot be compared with \alg, 
because the authors were unable to provide code that can answer shortest distance 
query in directed graphs.}
propose a unique pruning method based on degree of a vertex, which can efficiently reduce
the search space of BFS.  Highway centric label (HCL) \cite{AHC:Jin:2012} is
one of the fastest recent methods that is proposed for a shortest distance
query on both directed and undirected graphs.  In a follow-up work, Xiang
proposes TreeMap ~\cite{AED:Xiang:2014}, a tree decomposition based approach
for solving distance query exactly; the author compares TreeMap's solution with
those of HCL to show that the former has better performance.  Another recent
method is called IS-Label which is proposed by  Fu et al.~\cite{ISL:Fu:2013}.
They have also shown that that IS-Label has superior performance than HCL.  In
this work, we compare \alg\ with both IS-Label and TreeMap, which are among the
best of the existing index based methods.

Our proposed method exploits conversion of a directed graph into a {\em
directed acyclic graph (DAG)} by collapsing {\em strongly connected components}
(SCCs) into a vertex. It is a widely used approach for solving {\bf
reachability query} task in a directed graph~\cite{TFL:Cheng:2013,3HOPP:Jin:2009,GRAIL:Hilmi:2012,EAR:Jin:2008,SSR:Jin:2012,OLE:Cheng:2009}.
Any two nodes in an SCC are reachable from each other, hence for a directed
graph the reachability query between two nodes can be answered through the
reachability answer between their corresponding DAG nodes. However, note that,
a reachability query is easier than a distance query, because the latter
provides distance value as answer, which is relatively harder as the graph can
be weighted. Specifically, for online (non-indexing) solution, reachability has
a complexity of $O(|V|+|E|)$, and shortest distance query for weighted graph
has a higher complexity, $O(|E| \lg |V|)$.  Authors in~\cite{TFL:Cheng:2013} uses
topological folding of DAG for answering reachability query. Our proposed
method \alg\ also uses topological folding property of DAG to compress the DAG
level-wise, but unlike the above work, our work answers distance queries
on weighted graph.  Cheng et al.~\cite{OLE:Cheng:2009} is another existing work
which also proposes a DAG based approach for answering distance queries by
finding distance aware 2-HOP cover.

Note that, one of the known limitations of indexing based methods is that they
require more memory, but this is not a concern for \alg\ with today's commodity machine
having main memory in multiples of 2 GB.

\begin{figure*} [t]
\centering
\begin{subfigure}[h]{0.7\textwidth}
	\includegraphics[width=\textwidth]{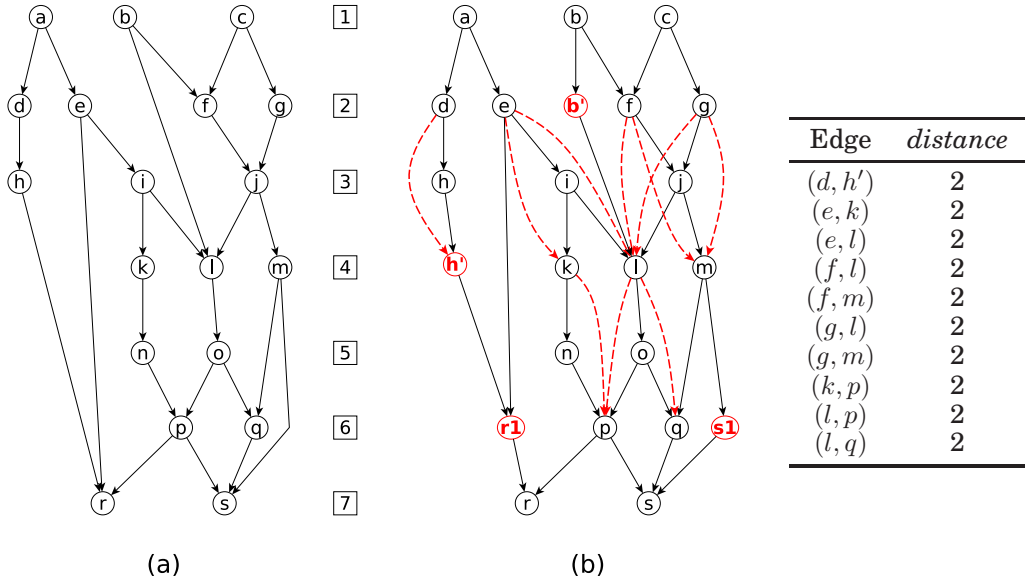}
\end{subfigure}
~
\begin{subfigure}[h]{0.28\textwidth}
\centering

  \begin{tabular}{cc} 
\toprule
  Edge & \textit{distance} \\ 
\midrule 
  $(d,h')$ & 2 \\
  $(e,k)$ & 2 \\
  $(e,l)$ & 2 \\
  $(f,l)$ & 2 \\
  $(f,m)$ & 2 \\
  $(g,l)$ & 2 \\
  $(g,m)$ & 2 \\
  $(k,p)$ & 2 \\
  $(l,p)$ & 2 \\
   $(l,q)$ & 2 \\
\bottomrule
  
  \end{tabular}
\end{subfigure}
\caption{Pre-processing of DAG before Compression: (a) Original DAG $G$ and (b) Modified DAG $G_{m}$. 
The dummy edges data structure ($DummyEdges$) associated with this modified DAG is shown to the right.}
\label{fig:dag-eg}
\end{figure*}

\section{Method}
\label{sec:method}
In this section, we discuss the shortest distance indexing of \alg\ for a DAG.
In subsequent section, we will show how this can be adapted for a general
directed graph.

\subsection{Topological compression} 
\label{sec:tc}

The main idea of \alg\ is based on topological compression of DAG, which is
performed during the index building step. During the compression, additional
distance information is preserved in a data structure which \alg\ uses for
answering a distance query efficiently. For the sake of simplicity, in
subsequent discussion we assume that the given graph is unweighted for which
the weight of each edge is 1 and the distance between two vertices is the
minimum hop count between them.  We will discuss the necessary adaptations that
are needed for a weighted graph at the end of this section.

\noindent {\bf Topological Level:} Given a DAG $G$, we use $V(G)$ and $E(G)$ to
represent set of vertices and edges of $G$, respectively. The topological
level of any vertex $v \in V(G)$, defined as $topo(v)$, is $1$ if $v$ has no
incoming edge, otherwise it is at least one higher than the topological level
of any of $v$'s parents. Mathematically, \[ topo(v)= \begin{cases}
\displaystyle \max_{(u, v)\in E(G)}~~ {topo(u)}+1,	& \text{if $v$ has
incoming edges}\\ 1,         	& \text{otherwise} \end{cases} \]

For a vertex $v$, if $topo(v)$ is even, we call $v$ an even-topology vertex,
otherwise $v$ is an odd-topology vertex.  An edge, $e = (u, v) \in E(G)$, is a
single-level edge if $topo(v)-topo(u)=1$, otherwise it is a multi-level edge.
For a DAG $G$, its topological level is the largest value of $topo(v)$ over the
vertices in $G$, i.e.: \[topo(G) = \max_{v \in V(G)}~~ topo(v)\]

\noindent {\bf Example:} Consider the DAG $G$ in Figure~\ref{fig:dag-eg}(a).
Topological level of vertices, $a, b$, and $c$ is 1, as the vertices have no incoming
edge. The topological level of vertex $l$ is 4, as one of the predecessor node
of $l$ is $i$, which has a topological level value of 3.  $Topo(G)$ is equal to
7, because 7 is the largest topological level value for one of the vertices in
$G$.~$\qed$

Topological compression of a DAG is performed iteratively, such that the
compressed output of one iteration is the input of subsequent iteration.  For an
input DAG $G$, one iteration of topological compression removes all
odd-topology vertices from $G$ along with the edges that are incident to the
removed vertices.  All single-level edges are thus removed, as one of the
adjacent vertices of these edges is an odd-topology vertex.  A multi-level edge
is also removed if at least one of the endpoints of the edge is an
odd-topology vertex. As a result of this compression, the topological level of
$G$ reduces by half. For the purpose of shortest distance index building,
starting from $G=G^0$, we apply this compression process iteratively to
generate a sequence of DAGs $G^1, G^2, \cdots, G^t$ such that the topological
level number of each successive DAG is half of that of the previous DAG, and
the topological level number of the final DAG in this sequence is 1; i.e.,
$topo(G^{i+1})= \lfloor topo (G^{i})/2 \rfloor$, and $topo(G^t)=1$, where $t =
\lfloor \log_{2} {topo(G)} \rfloor$.\\

\noindent {\bf Example:} Consider the same DAG $G=G^0$ in
Figure~\ref{fig:dag-eg}(a).  Its topological compression in the first
iteration, $G^1$ is shown in Figure~\ref{fig:comgraph}(a), and in the second
iteration, $G^2$ is shown in Figure~\ref{fig:comgraph}(c). $G^2$ is the last
compression state of $G^0$, as topological level of $G^2$ is 1. Note that,
in $G^1$, all odd-topology vertices of $G^0$, such as, $a, b, c, h, i,j$, etc.
are removed. All single-level edges of $G^0$, such as, $(e,i), (k,n), (p,s)$, etc.
are removed. Multi-level edges, such as, $(b,l)$ and $(m,s)$ are also removed.
However, there are newly added vertices in $G^1$, such as $b', h', r1$, and $s1$,
along with newly added edges, such as, $(b',l)$ and $(e,r1)$. More discussion
about these additional vertices and edges are given in the following paragraphs.
~$\qed$

Each iteration of topological compression of a DAG causes loss of information
regarding the connectivity among the vertices; for correctly answering distance
queries \alg\ needs to preserve the connectivity information as the input DAG
is being compressed. The preservation process gives rise to additional vertices
and edges in $G^1$, which we have seen in the above example. The connectivity
preservation process is discussed in detail below.

The most common information loss is caused by the removal of single-level
edges. However, such edges are also easily recoverable from the lastly
compressed graph in which the edges were present before their removal. So,
\alg\ does not perform any action for explicit preservation of single-level
edges. To preserve the information that is lost due to the removal of
multi-level edges, \alg\ inserts additional even-topology vertices, together
with additional edges between the even-topology vertices to prepare the DAG for
the compression. The insertion of additional vertices and edges for preserving
the information of a removed DAG multi-level edge $e = (u, v)$ is discussed
below along with an example given in Figure~\ref{fig:dag-eg}. In this figure,
the topological levels are mentioned in rectangular boxes. On the left side we
show the original graph, and on the right side we show the modified graph which
preserves information that is lost due to compression. 

There are four possible cases for an edges $(u, v)$ that is being removed due
to topological compression.

\noindent \textbf{Case 1:} ($topo(u)$ is odd  and  $topo(v)$ is even).
Compression removes the vertex $u$, so we add a \textbf{fictitious} vertex $u'$
such that $topo(u')= topo(u)+1$. Then we remove the multi-level edge $(u,v)$
and replace it with with two edges $(u,u')$ and $(u',v)$. Since topological
level number of both $u'$ and $v$ are even, the topological compression does
not delete the edge $(u',v)$. For example, consider the multi-level edge
$(b,l)$ in figure~\ref{fig:dag-eg}(a), $topo(b) = 1$ (odd), and $topo(l) = 4$
(even). In the modified graph Figure~\ref{fig:dag-eg}(b) this edge is replaced
by two edges $(b,b')$ and $(b',l)$, where $b'$ is the fictitious node.\\

\noindent
\textbf{Case 2:} ($topo(u)$ is even  and  $topo(v)$ is odd). This case is
symmetric to Case 1 as compression removes $v$ instead of $u$. We use a similar
approach like Case 1 to handle this case. We create $v_1$, a copy of the vertex $v$
such that $topo(v_1)=topo(v)-1$ and replace the multi-level edge
$(u,v)$ with two edges $(u,v_{1})$ and $(v_{1},v)$. To distinguish 
the vertices added in these two cases, the newly added vertex is called 
\textbf{fictitious} for Case 1, and it is called \textbf{copied} for Case 2.
The justification of such naming will be clarified in latter part of the text.
Example of Case 2 in Figure~\ref{fig:dag-eg}(a) is edge $(m,s)$, where $topo(m) = 4$ (even) 
and $topo(s) = 7$ (odd). In modified graph, we add copied node $s_{1}$
and replace the original edge with two edges shown in Figure~\ref{fig:dag-eg}(b).\\
	
\noindent
\textbf{Case 3:} ($topo(u)$ is odd  and  $topo(v)$ is odd). In this case we use
the combination of above two methods and add two new vertices $u'$ and $v_{1}$.
We set topological level numbering of new vertices as mentioned above.  Also we
replace multi level edge $(u,v)$ with three different edges $(u,u')$ ,
$(u',v_{1})$, and $(v_{1},v)$.  Multi level edge $(h,r)$
in Figure~\ref{fig:dag-eg}(a) is an example of this case.  As shown in
Figure~\ref{fig:dag-eg}(b), we add two new vertices $h'$ and $r_{1}$ and three
new edges, $(h,h')$, $(h',r_{1})$, and  $(r_{1},r)$ after deleting the original
edge $(h,r)$. Note that, if $topo(u) = topo(v) - 2$, $topo(u')
= topo(v_{1})$. In this case, we treat it as Case 1 by adding only $u'$ (but
not $v_{1}$) and following the Case 1. It generates a single-level edge
$(u',v)$, which we do not need to handle explicitly.\\

\noindent
\textbf{Case 4:} ($topo(u)$ is even and  $topo(v)$ is even). This is the
easiest case as both $u$ and $v$ are not removed by the compression process and
we do not make any change in the graph.  Also note that the changes in the
above three cases convert those cases into this Case 4. For example, applying
Case 1 for edge $(b,l)$ in Figure~\ref{fig:dag-eg} creates new multi-edge
$(b',l)$ which is an occurrence of Case 4. Similarly Case 2 creates the Case 4
multi-edge $(m, s_{1})$.~$\blacksquare$\\

\begin{figure}
\centering
\includegraphics[width=90mm,keepaspectratio]{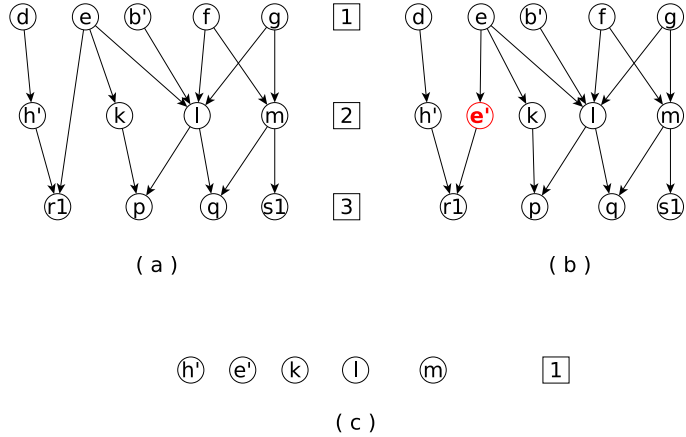}
\caption{(a) 1-Compressed Graph $G^{1}$,  (b) Modified 1-compressed Graph $G^{1}_{m}$,  (c) 2-Compressed Graph $G^{2}$}
\label{fig:comgraph}
\end{figure}

\noindent
\textbf{Dummy edges data structure:} 
We described earlier, we do not need to handle single-level edges separately.
However if two continuous single-level edges are removed, we still need to
maintain the logical connection between the even-topology vertices. For
example, in Figure~\ref{fig:dag-eg}(a) edges $(e,i) ,(i,k)$, and $(i,l)$ are
single-level edges which will be deleted after the first compression iteration
because $topo(i) = 3$. Now, information of logical (indirect) connection
between $e$ to $k$ and $l$ needs to be maintained, because all three vertices
will exist after the compression. To handle this, we add new \textbf{dummy
edges} $(e,k)$ and $(e,l)$; dummy edges are shown as dotted lines in
Figure~\ref{fig:dag-eg}(b). Note that, for any dummy edge $(u,v)$, $topo(v) -
topo(u) = 2$ in the current DAG and the edges for which the node-topology
difference is higher than 2 are handled by the above 4 multi-level edge cases.
For same start and end nodes, if there are multiple dummy edges, \alg\
considers edge with the smallest distance.  To find the dummy edges, we scan
through all odd-topology vertices and find their single-level incoming and
outgoing edges.  We store all these dummy edges along with the corresponding
\textit{distance} value in a list called $DummyEdges$ as shown in
Figure~\ref{fig:dag-eg}, which we use during the index generation step.
For example dummy edge $(d,h')$ has a distance 2 in the Figure~\ref{fig:dag-eg},
then $[(d,h'),2]$ is stored in $DummyEdges$.

At each compression iteration, we first obtain a modified graph, with
fictitious vertices, copied vertices, and dummy edges and then apply
compression to obtain the compressed graph of the subsequent iteration. The
fictitious vertices, copied vertices, and dummy edges of the modified graph in
earlier iteration become regular vertices and edges of the compressed graph in
subsequent iteration. The above modification and compression proceeds
iteratively until we reach $t$-compressed graph, $G^t$, for which the
topological level number is 1. We use $G_m$ to denote the modified uncompressed
graph, $G^1_m$ to denote the modified 1-compressed graph, $G^{2}_{m}$ to denote
the modified 2-compressed graph and so on. For example,
Figure~\ref{fig:comgraph}(a) shows $G^1$ which is obtained by compressing the
modified graph $G_m$ in Figure~\ref{fig:dag-eg}(b).
Figure~\ref{fig:comgraph}(b) shows $G^1_m$, the modified 1-compressed graph,
and Figure~\ref{fig:comgraph}(c) shows $G^2$, the 2-compressed graph. We refer the
set of all modified compressed graphs as $G^*_m$, i.e. $\lbrace$
$G^t_m,...,G^2_m,G^1_m$ $\rbrace$. 

\subsection{Index generation}
\label{sec:index_gen}
\alg's index data structure is represented as a table of \textit{key-value}
pairs. For each \textit{key}, (vertex) $v$ of the input graph, the \textit{value}
contains two lists: (i) outgoing index value $I^{out}_{v}$ , which
stores shortest distances from $v$ to a set of vertices reachable from $v$; and
(ii) incoming index value $I^{in}_{v}$, which stores shortest
distances between $v$ and a set of vertices that can reach $v$. Both the lists
contain a collection of tuples, \texttt{$\langle vertex\_id, distance
\rangle$}, where \textit{vertex\_id} is the id of a vertex other than $v$, and
\textit{distance} is the corresponding shortest path distance between $v$ and
that vertex. 

\begin{algorithm}
\renewcommand{\algorithmicrequire}{\textbf{Input:}}
\renewcommand{\algorithmicensure}{\textbf{Output:}}
\caption{Outgoing Index \textit{value} Generation}
\label{alg1}
\begin{algorithmic}[1]
\REQUIRE $G^{*}_{m}$ (set of modified graphs), DummyEdges (set of dummy edges and corresponding distance)
\ENSURE $I^{out}_{*}$ (set of out going indexes for all nodes)
\FORALL {$G^{curr}_{m} \in \lbrace G^{topo(G)}_{m}, ... ,G^{1}_{m},G_{m}  \rbrace $}
  \STATE $O$ = $\lbrace u \in V(G^{curr}_{m})  | topo(u) = odd\_number \rbrace$
  \FORALL {$v \in O$}
   	\STATE $org\_v$ = \proc{GetOriginal$(v)$} 
    \FORALL {$(v,w) \in {E(G^{curr}_{m})}$}
      \STATE $org\_w$ = \proc{GetOriginal$(w)$}
      \IF {$org\_v == org\_w$}
        \STATE Continue
      \ENDIF
      \IF {$(v,w) == Dummy\_Edge$}
        \STATE distance = \proc{GetDummyDistance$(DummyEdges,v,w)$} 
      \ENDIF
      \IF {$w == fictitious\_vertex$}
        \STATE distance = distance - 1
      \ENDIF
      \STATE \proc{RecursiveInsert$(I^{out}_{org\_v},org\_w,distance,out)$} 
    \ENDFOR
  \ENDFOR        
\ENDFOR
\end{algorithmic}
\end{algorithm} 

At the beginning of the indexing step, for each vertex $v$, \alg\ initializes
$I^{out}_{v}$ and $I^{in}_{v}$ with an empty set. It generates index from
$G^t_m$ and repeats the process in reverse order of graph compression i.e. from
graph $G^t_{m}$ to $G_{m}$. In $i$'th iteration of index building, it uses
$G^{t-i}_m$ and inserts a set of tuples in $I^{out}_{v}$ and $I^{in}_{v}$, only
if $v$ is an odd-topology vertex in $G^{t-i}_{m}$.  Thus, during the first
iteration, for every odd-topology vertex $v$ of $G^{t-1}_m$, for an incoming
edge $(u, v)$ \alg\ first checks whether $(u, v)$ is in $DummyEdges$ data
structure, if so, it inserts $\langle u, d\rangle$ in $I^{in}_{v}$, where the
distance $d$ value is obtained from the $DummyEdges$ data structure. Otherwise,
it inserts $\langle u, 1\rangle$ in $I^{in}_{v}$. Similarly, for an outgoing
edge $(v, w)$ \alg\ inserts $\langle w, d\rangle$ in $I^{out}_{v}$, if $(v,w)$
is in $DummyEdges$, otherwise it inserts $\langle w, 1\rangle$ in
$I^{out}_{v}$.  \alg\ also inserts (Line 16 in Algorithm \ref{alg1}) elements
of $I^{in}_{u}$ and $I^{out}_{w}$ into $I^{in}_{v}$ and $I^{out}_{v}$,
respectively, using recursive calls.

Algorithm~\ref{alg1} shows the pseudo-code of the index generation procedure
for outgoing index \textit{values} only. An identical piece of code can be used
for generating incoming index \textit{values} also, but for that we need to
exchange the roles of fictitious and copied vertex, and change the $I^{out}_*$
with $I^{in}_*$ in Line 5-21 (more discussion on this is forthcoming). 

As shown in Line 2 of Algorithm~\ref{alg1}, \alg\ first collects all
odd-topology vertices in variable $O$ and builds out-indexes for each of these
vertices using outgoing edges from these vertices (the edge $(v, w)$ in Line 5
of Algorithm~\ref{alg1}). Note that, vertices $v$ and $w$ in $G^{curr}_m$ can
be fictitious or copied vertex; \alg\ uses the subroutine {\sc GetOriginal()}
which returns original vertex corresponding to any fictitious or copied vertex,
if necessary (Line 4 and 6).  Using the data structure $DummyEdges$ (discussed
in section~\ref{sec:tc}), it first checks whether the edge $(v, w)$ is a dummy
edge (Line 10); if so, it obtains the actual \textit{distance} from the data
structure. In case the end-vertex $w$ is a fictitious vertex, \alg\ decrements
the distance value by 1 (Line 14), because for each fictitious vertex, an extra
edge with distance 1 is added from the original vertex to the fictitious vertex
which has increased the distance value by one. For instance, in the graph in
Figure~\ref{fig:dag-eg}, the actual distance from $a$ to $h$ is 2, but the
fictitious vertex $h'$ records the distance to be $3$, which should be
corrected. On the other hand, if $w$ is a copied vertex, \alg\ does not make
this subtraction, because when a copied vertex is used as destination instead
of the original vertex, the distance between the source vertex and the copied
vertex correctly reflects the actual distance. For an example, in the same
graph, the distance between $e$ and $r$ is 1; when we use the copied vertex
$r1$ instead of $r$, distance between $e$ and $r1$ is recorded as 1, which is
the correct distance between $e$ and $r$; so no distance correction is needed
during the out index building when the destination vertex is a copied vertex.
This is the reason why we make a distinction between the fictitious
vertices and the copied vertices. 

Finallly note that, after generating indexes for each vertex there may be
multiple entries for some vertices; from these multiple entries we need to get
the smallest value (entry) and remove others.  For building incoming index
\textit{values}, \alg\ subtracts 1 for a copied vertex, but does not subtract 1
for a fictitious vertex, as the roles of start and end vertices flip for the
incoming index \textit{values}. Below, we give a complete index building
example using the vertex $a$ of the graph in Figure~\ref{fig:dag-eg}.\\

\begin{table}
\centering
\caption{Intermediate index generated from the DAG in Figure~\ref{fig:comgraph}(b)}
\label{Table:IntermediateIndex}
\begin{tabular}{c c|c c} 
\toprule
$\textbf{\textit{key}}$ & Out Index \textit{value} & $\textbf{\textit{key}}$ & In Index \textit{value}\\ 
\midrule

$\textbf{b}$ & $\lbrace \langle l,1\rangle \rbrace$ & $\textbf{p}$ & $\lbrace \langle k,2\rangle,\langle l,2\rangle \rbrace $\\ 

$\textbf{d}$ & $\lbrace \langle h,1\rangle \rbrace$ & $\textbf{q}$ & $\lbrace \langle m,1\rangle , \langle l,2\rangle \rbrace$\\ 

$\textbf{e}$ & $\lbrace \langle k,2\rangle ,\langle l,2\rangle \rbrace$ & $\textbf{r}$ & $\lbrace \langle h,1\rangle ,\langle e,1\rangle \rbrace$\\ 

$\textbf{f}$ & $\lbrace \langle l,2\rangle ,\langle m,2\rangle \rbrace$ & $\textbf{s}$ & $\lbrace \langle m,1\rangle \rbrace$\\ 

$\textbf{g}$ & $\lbrace \langle l,2\rangle ,\langle m,2\rangle \rbrace$ & {} & {}\\

\bottomrule

\end{tabular}
\end{table}

\begin{table*}
\centering
\caption{Index for the DAG in Figure~\ref{fig:dag-eg}}
\label{Table:Index-result}
\begin{tabular}{c c c} 
\toprule
$\textbf{\textit{key}}$ & Out Index \textit{value} & In Index \textit{value} \\ 
\midrule

$\textbf{a}$ & $\lbrace \langle d,1\rangle,\langle e,1\rangle,\langle h,2\rangle,\langle k,3\rangle,\langle l,3\rangle \rbrace$ & $\emptyset$ \\ 

$\textbf{b}$ & $\lbrace \langle l,1\rangle,\langle f,1\rangle,\langle m,3\rangle \rbrace$ & $\emptyset$ \\ 

$\textbf{c}$ & $\lbrace \langle f,1\rangle,\langle g,1\rangle,\langle l,3\rangle,\langle m,3\rangle \rbrace$ & $\emptyset$\\ 

$\textbf{d}$ & $\lbrace \langle h,1\rangle \rbrace$ & $\emptyset$ \\ 

$\textbf{e}$ & $\lbrace \langle k,2\rangle ,\langle l,2\rangle \rbrace$ & $\emptyset$ \\

$\textbf{f}$ & $\lbrace \langle l,2\rangle ,\langle m,2\rangle \rbrace$ & $\emptyset$ \\

$\textbf{g}$ & $\lbrace \langle l,2\rangle ,\langle m,2\rangle \rbrace$ & $\emptyset$ \\

$\textbf{h}$ & $\emptyset$ & $\lbrace \langle d,1\rangle \rbrace$ \\

$\textbf{i}$ & $\lbrace \langle k,1\rangle,\langle l,1\rangle \rbrace$ & $\lbrace \langle e,1\rangle \rbrace$ \\

$\textbf{j}$ & $\lbrace \langle l,1\rangle,\langle m,1\rangle \rbrace$ & $\lbrace \langle f,1\rangle,\langle g,1\rangle \rbrace$ \\

$\textbf{n}$ & $\lbrace \langle p,1\rangle \rbrace$ & $\lbrace \langle k,1\rangle \rbrace$ \\ 

$\textbf{o}$ & $\lbrace \langle p,1\rangle \rbrace$ & $\lbrace \langle l,1\rangle \rbrace$ \\ 

$\textbf{p}$ & $\emptyset$ & $\lbrace \langle k,2\rangle ,\langle l,2\rangle \rbrace$ \\ 

$\textbf{q}$ & $\emptyset$ & $\lbrace \langle m,1\rangle, \langle l,2\rangle \rbrace$ \\

$\textbf{r}$ & $\emptyset$ & $\lbrace \langle h,1\rangle,\langle e,1),\langle p,1\rangle,\langle k,3\rangle,\langle l,3\rangle\rbrace$ \\

$\textbf{s}$ & $\emptyset$ & $\lbrace \langle m,1\rangle,\langle p,1\rangle,\langle k,3 \rangle ,\langle l,3 \rangle,\langle q,1 \rangle \rbrace$ \\ 
\bottomrule
\end{tabular}
\end{table*}

\noindent {\bf Example:} We want to find the outgoing index (\textit{value})
for vertex $a$ (\textit{key}) of the graph $G$ in Figure~\ref{fig:dag-eg}(a).
$topo(G)=2$, so we start building index using the graph $G^1_m$, which is shown
in Figure~\ref{fig:comgraph}(b). In the first iteration, \alg\ builds
$I^{out}_{d} = \lbrace \langle h,1 \rangle \rbrace$; the distance value of $1$
comes as follows: \alg\ uses distance of dummy edge $(d,h')$ that is 2
(Figure~\ref{fig:dag-eg}-II) and then it replaces the fictitious vertex $h'$
with $h$ and obtains a distance of 1 by subtracting 1 from 2 (Line 14).  It
also inserts the following entries under the {\em key} $e$; i.e., $I^{out}_{e}
= \lbrace \langle k,2 \rangle,\langle l,2 \rangle \rbrace$.  The resulting
indexes after this iteration is presented in
Table~\ref{Table:IntermediateIndex}; incoming index \textit{values} for keys
$b,d,e,f,g$ are empty (not presented in the table) and similarly outgoing index
\textit{values} for keys $p,q,r,s$ are empty.  For next iteration considering
$G_m$, \alg\ inserts $\langle d,1 \rangle$ in $I^{out}_{a}$; using recursive
calls of algorithm~\ref{alg2} (Line 11), this function also inserts $\langle
h,2 \rangle$ in $I^{out}_{a}$, recursion stops at $h$ because $I^{out}_h$ is
empty (Line 6).  Similarly, $\langle e,1 \rangle$ and $\langle k,3
\rangle,\langle l,3 \rangle$ are inserted in $I^{out}_a$ recursively from
$I^{out}_{e}$.  At the end of the algorithm~\ref{alg1} we remove duplicate
entries from indexes.  For example, incoming index for key $s$ has two entries
for vertex $m$, $\langle m,1 \rangle$ and $\langle m,2\rangle$, one
corresponding to edge $(m,s1)$ in $G^{1}_{m}$ and the other is a recursive
result from $q$ to $s$ in $G_{m}$. \alg\ considers $\langle m,1\rangle$ and
discards the other entry from $I^{in}_{s}$. For the graph in
Figure~\ref{fig:dag-eg}(a), corresponding indexes are presented in
Table~\ref{Table:Index-result}.

\begin{algorithm}
\renewcommand{\algorithmicrequire}{\textbf{Input:}}
\renewcommand{\algorithmicensure}{\textbf{Output:}}
\caption{\proc{RecursiveInsert$(I^{io}_{v},a,distance,in\_or\_out)$}}
\label{alg2}
\begin{algorithmic}[1]
\IF {$in\_or\_out == in$}
  \STATE $I^{io}_{a} = I^{in}_{a}$
\ELSE
  \STATE $I^{io}_{a} = I^{out}_{a}$
\ENDIF
\IF{$I^{io}_{a} == \emptyset$}
  \STATE $add\_tuple(I^{io}_{v},a,distance)$
\ELSE
  \STATE $add\_tuple(I^{io}_{v},a,distance)$
  \FORALL {$(x,dist) \in I^{io}_{a} $}
    \STATE \proc{RecursiveInsert$(I^{io}_{v},$ $x,$ $distance$ $+$ $dist,$ $in\_or\_out)$}
  \ENDFOR
\ENDIF
\end{algorithmic}
\end{algorithm}

\subsection{Index for weighted graph}

For weighted graph, \alg\ makes some minor changes in the above algorithm.
First, distance values are stored both in the indexes and in the $DummyEdge$
data structure. Many of these distances are implicitly 1 for unweighted graph,
which is not true for weighted graph, so, for the latter \alg\ stores the
distance explicitly. Also, it ensures that the distance value between
fictitious (or copied) vertices and an original vertex is one, so that
the Algorithm~\ref{alg1} works as it is.

\subsection{Query processing}
\label{sec:query_pro}
For query processing, \alg\ uses the distance indexes that is built during the
indexing stage. For a given distance query from $u$ to $v$, i.e. to compute
$\delta(u,v)$, \alg\ intersects outgoing index \textit{value} of \textit{key}
$u$ i.e. $I^{out}_{u}$ and incoming index \textit{value} of \textit{key} $v$
i.e. $I^{in}_{v}$ and finds common \textit{vertex\_id} in $I^{out}_{u}$ and
$I^{in}_{v}$, along with the \textit{distance} values.  To cover the cases,
when $v$ is in the outgoing index {\em value} of $u$, or $u$ is in the incoming
index {\em value} of $v$, the tuples  $\langle u,0 \rangle$ and $\langle v,0
\rangle$ are also added in $I^{out}_u$ and $I^{in}_v$ respectively and then the
intersection set of the indexes is found.  If the intersection set size is 0, there is
no path from $u$ to $v$ and hence the distance is infinity.  Otherwise, the
distance is simply the sum of the \textit{distances} from $u$ to
\textit{vertex\_id} and \textit{vertex\_id} to $v$. If multiple paths exist, we
take the one that has the smallest distance value.\\
 
\noindent {\bf Example:} We want to find $\delta(a,s)$ in
Figure~\ref{fig:dag-eg}. From table~\ref{Table:Index-result}, $I^{out}_a \cap
I^{in}_s=\{k,l\}$. Now, we need to sum up the corresponding \textit{distance}
values, that gives $\{\langle k,6\rangle,\langle l,6\rangle \}$. Now we need to
find smallest distance value; in this case both the values are same, hence we
can provide any one as a result.

\subsection{Theoretical proofs for correctness}
\label{sec:proof}

In this section, we prove the correctness of \alg, through the claim that
\alg's index is based on 2-hop covers of the shortest distance in a graph and 
method described in Section~\ref{sec:query_pro} gives correct shortest distance value. 
For shortest path, such a cover is a collection $S$ of shortest paths such that
for every two vertices $u$ and $v$, there is a shortest path from $u$ to
$v$ that is a concatenation of atmost two paths from $S$.~\cite{2Hop:Cohen:2002}.
That is, shortest path from $u$ to $v$ is stored in $S$ or there is an 
intermediate node $x$ such that shortest paths from $u$ to $x$ and from $x$ to 
$v$ are stored in $S$. For \alg's index also, the shortest distance from any node $u$
to node $v$ is the 2-hop cover such that the index itself has shortest distance value 
from $u$ to $v$ or there is an intermediate node $x$ which would be present in 
both $I^{out}_{u}$ and  $I^{in}_{v}$. \\

\noindent \textbf{Example:} In DAG $G$ in Figure~\ref{fig:dag-eg}(a) to find distance 
from $a$ to $s$, we need to check the outgoing index value for vertex $a$ and 
the incoming index value for vertex $s$ in Table~\ref{Table:Index-result}. This 
gives us two possible shortest paths passing through intermediate node
$k$ or $l$, because distance in both cases is same. Thus, there can be multiple
shortest paths however, atmost one intermediate node in the index.

In the Theorem~\ref{thrm:main}, we try to identify the topological layer of an intermediate node $x$. 
We identify a unique topological level for each pair of $u$ and $v$, which tells
there is atmost one intermediate node in a shortest path from $u$ to $v$ because  
in DAG there cannot be a directed edge within topological layer. We begin with the 
following lemmas, which will be useful for constructing the proof of the theorem.

\begin{lemma} In $G_m$, if a node $u$ is at topological level $2^i$, 
it will be at topological level $1$ in $G^{i}$.
\label{lemma1}	
\end{lemma}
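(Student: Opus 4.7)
The plan is to prove the lemma by induction on $i$, exploiting the fact that each iteration of topological compression removes all odd-topology vertices of the modified graph and, as I will argue, halves the topological level of every remaining original vertex.

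For the base case $i=1$, I would consider a vertex $u$ at level $2$ in $G_m$. In a DAG, the only incoming edges to a level-$2$ vertex must come from vertices at level $1$, since dummy edges bridge a two-level gap and an incoming dummy edge to level $2$ would require a (nonexistent) level-$0$ source. When compression removes every odd-topology vertex of $G_m$ along with its incident edges, $u$ loses all incoming edges and therefore appears at level $1$ in $G^1$. For the inductive step, assuming the claim for $i-1$, I would reduce the level-$2^i$ case to it by establishing the halving relation: if a vertex $w$ sits at an even level $2k$ in a modified compressed graph, it sits at level exactly $k$ in the next compressed graph. Granted this, a vertex $u$ at level $2^i$ in $G_m$ lies at level $2^{i-1}$ in $G^1$; since the subsequent modification step only inserts fictitious/copied vertices and dummy edges on replacements of existing paths and so does not change the topological level of original vertices, $u$ is still at level $2^{i-1}$ in $G^1_m$. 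Invoking the induction hypothesis (re-indexed to start from $G^1_m$ and its compressions $G^2, \ldots, G^i$) then gives level $1$ in $G^i$.

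The main obstacle is the halving claim $2k \mapsto k$, and I would split it into a lower and an upper bound. For the lower bound, a longest directed path of length $2k-1$ ending at $u$ in the modified graph alternates between even and odd levels after the Case $1$--$4$ adjustments; pairing up consecutive pairs of edges yields a length-$(k-1)$ path in the compressed graph whose intermediate hops are either surviving even-to-even multi-level edges or inserted dummy edges, and whose source sits at level $1$ in $G^1$ by the base case reasoning, forcing $u$'s level to be at least $k$. For the upper bound, every edge of $G^1$ corresponds to either a direct two-level multi-level edge or a dummy edge skipping one odd-topology vertex of $G_m$, so each edge in $G^1$ lifts to a path of length at least $2$ in $G_m$; consequently any length-$\ell$ path in $G^1$ lifts to a length-$\geq 2\ell$ path in $G_m$, and $u$'s level in $G^1$ cannot exceed $2k/2 = k$. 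Combining the two bounds gives the halving relation, completing the induction.
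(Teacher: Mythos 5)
Your proposal is correct and takes essentially the same route as the paper's proof: both arguments rest on the observation that a single compression sends a vertex at even topological level $2x$ to level $x$ in the next compressed graph, and then iterate this halving $i$ times to bring level $2^i$ down to $1$ in $G^{i}$. The only real difference is that you back up the halving step with explicit longest-path lower and upper bounds (via dummy edges and surviving even-to-even multi-level edges), whereas the paper simply asserts it; aside from the slight imprecision of calling surviving multi-level edges ``two-level'' (they may span more, which only strengthens your upper bound), this adds rigor rather than changing the approach.
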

\begin{proof}
\alg\ compression method removes all odd-topology nodes and carries over nodes
from the even topological levels to the next compression iteration. Thus any
node from an even topological level $2x$ in some compressed graph will be at
topological level $x$ in the compressed graph of next iteration. Say, the node
$u$ is at topological level $2^i$ in $G_m$, then it will be at topology level
$2^{i-1}$ in $G^1$. Since $2^{i-1}$ is also even, no fictitious or copied
vertex will be added for $u$, and in $G^{1}_m$, it will remain at $2^{i-1}$ level.
In the next compression iteration, $u$ will simply be moved to $2^{i-2}$ level
in $G^2$ and so on. Hence, it will be at level $2^{i-i}=2^0=1$ level in $G^{i}$ graph.
\end{proof}
\noindent {\bf Example} In the graph $G_m$ shown in Figure~\ref{fig:dag-eg}(b), the node
$d$ is at topological level $2$ and the node $k$ is at topological level $4$. In $G^1$
shown in Figure~\ref{fig:comgraph}(a) the node $d$ is at topological level $1$;
similarly, in $G^2$ shown in Figure~\ref{fig:comgraph}(c), the node $k$ is at
topological level $1$.

\begin{lemma}
\label{lemma2}
In \alg's index, for all keys, the \textit{values} contain vertices, which are only from 
even topological level in the modified DAG $G_m$.
\end{lemma}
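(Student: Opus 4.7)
The plan is to prove the lemma by induction on the iteration of the outer loop of Algorithm~\ref{alg1}, which processes the modified compressed graphs in the order $G^{t}_m, G^{t-1}_m, \ldots, G^{1}_m, G_m$. The invariant I will maintain is: after processing $G^{i}_m$, every vertex appearing as a value in any $I^{out}_\cdot$ or $I^{in}_\cdot$ lies at an even topological level in $G_m$.

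First, I would establish the following structural claim about $G^{curr}_m$: for any odd-topology vertex $v$ in $G^{curr}_m$ and any outgoing edge $(v,w)\in E(G^{curr}_m)$, the destination $w$ sits at an even topological level in $G^{curr}_m$ (and symmetrically for incoming edges). This follows by case analysis on the preprocessing in Section~\ref{sec:tc}: a multi-level edge from odd $v$ is replaced by a path whose first segment terminates at a fictitious vertex at level $topo(v)+1$ (even); a single-level edge from odd $v$ lands at level $topo(v)+1$ (even); and dummy edges are constructed to connect even-topology predecessors and successors of the bypassed odd vertex.

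Next, I would tie levels in $G^{curr}_m$ back to levels in $G_m$ using a generalization of Lemma~\ref{lemma1}: an original vertex surviving $curr$ compressions at level $\ell$ in $G^{curr}$ lies at level $2^{curr}\cdot \ell$ in $G_m$, so an original vertex at any even level in $G^{curr}_m$ must sit at an even level in $G_m$. For a fictitious or copied vertex $w$ introduced in $G^{j}_m$ for some odd-topology vertex $y$ of $G^{j}$, $\proc{GetOriginal}(w) = y$ (recursing on $y$ if it is itself fictitious/copied from an earlier iteration); by Lemma~\ref{lemma1} again, $y$ then sits at level $2^{j}\cdot(\text{odd})$ in $G_m$, which is even whenever $j \ge 1$. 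The boundary case $j=0$ needs special attention: when $w$ is a fictitious/copied vertex added in $G_m$ for an odd original vertex $y$, the only edges in $G_m$ from an odd vertex $v$ that are incident to $w$ satisfy $\proc{GetOriginal}(v) = \proc{GetOriginal}(w) = y$, so Line~7 of Algorithm~\ref{alg1} discards them before Line~16 is ever reached.

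Combining the above, every direct insertion by Line~16 places only a vertex at an even topological level of $G_m$ into the value set; recursive insertions performed by \proc{RecursiveInsert} (Algorithm~\ref{alg2}) draw from previously populated indexes, which by the inductive hypothesis already contain only such vertices. The hardest part is expected to be the analysis of fictitious and copied vertices introduced in $G_m$ (the $j=0$ case) that survive multiple compression steps and become destinations of edges emanating from odd-topology vertices in some later $G^{curr}_m$ — in particular, dummy edges whose two even-topology endpoints in $G_m$ become one odd and one even after the halving effected by compression. I would handle this by a nested induction on the iteration index $j$ at which the fictitious or copied vertex was introduced, repeatedly invoking Lemma~\ref{lemma1} to absorb the required factors of two into the $G_m$-level, together with a careful inspection of the interplay between the Line~7 skip and such long-lived fictitious/copied vertices.
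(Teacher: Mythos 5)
Your overall skeleton (odd-topology keys, the observation that in each modified graph every edge incident to an odd-topology vertex ends at an even-topology vertex, the level-doubling correspondence generalizing Lemma~\ref{lemma1}, and the fact that \proc{RecursiveInsert} only re-uses previously inserted values) matches the paper's short argument. The genuine gap is exactly the place you defer to a ``nested induction'': the fictitious and copied vertices created in $G_m$ itself (your $j=0$ case). Your Line~7 argument only disposes of them while $G_m$ is being processed, but these vertices sit at even levels of $G_m$, survive compression, and later become endpoints of edges (in particular dummy edges) leaving odd-topology vertices of some $G^{curr}_m$ with $curr\ge 1$. At that point Line~16 inserts $\proc{GetOriginal}(w)$, which is the original vertex at an \emph{odd} level of $G_m$. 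This actually happens in the paper's running example: in $G^1_m$ (Figure~\ref{fig:comgraph}(b)) the dummy edge $(d,h')$ makes Algorithm~\ref{alg1} insert $\langle h,1\rangle$ into $I^{out}_d$ (Table~\ref{Table:IntermediateIndex}), and $topo_{G_m}(h)=3$ is odd; likewise $b,r,s$ appear as stored values although they are at odd levels of $G_m$. So the invariant you propose to carry through the induction --- ``every vertex appearing as a value lies at an even topological level in $G_m$,'' read literally on the stored vertex ids after \proc{GetOriginal} --- is false, and no amount of absorbing factors of two via Lemma~\ref{lemma1} can repair it, because for a $j=0$ fictitious/copied vertex the original really does lie at an odd level.

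The fix is not a cleverer induction but an interpretive one, which is what the paper's proof and its example silently do: the value $h$ must be identified with its even-level stand-in $h'$ (the example lists the values as $h'(h)$, $b'(b)$, $r_1(r)$, $s_1(s)$ and calls them even-topology nodes of $G_m$). In other words, the correct invariant is that every inserted value was inserted through a vertex of the current modified graph lying at an even level, and that vertex --- either an original even-topology vertex of $G_m$ or a fictitious/copied representative placed at $topo(\cdot)\pm 1$, hence at an even level of $G_m$ --- is the object the lemma (and later Theorem~\ref{thrm:main}, via its ``Note'' on multilevel edges) actually needs. If you restate your invariant in terms of these representatives rather than the stored original ids, your induction goes through and essentially coincides with the paper's argument; as written, the $j=0$ branch fails on the paper's own example.
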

\begin{proof}
As per Line 2 of Algorithm~\ref{alg1}, \alg's index \textit{keys} are nodes
from only an odd topological level, and the {\it values} of index are built
using the incident edges of those key nodes. In the modified graph $G_m$, all
the edges from/to an odd topology vertex connects with an even topology vertex,
through the use of fictitious/copied nodes (if needed). Hence, if any node in 
DAG $G_m$ is at an odd topological level, it cannot be included as an index 
\textit{value}. Additionally when we compress $G^{i}_m$ to get $G^{i+1}$, we 
only include nodes from even topological levels, hence nodes from odd levels 
will never be included as a \textit{value} for index building at compressed levels also.
\end{proof}
\noindent {\bf Example:}
See the completely built index of the graph $G$ in Figure~\ref{fig:dag-eg}(a)
as shown in Table~\ref{Table:Index-result}. The nodes that appear as \textit{values} 
are $\{d,e,b'(b),f,g,h'(h),k,l,m,r_1(r),p,q,s_1(s)\}$. All of these are 
from the even topology nodes in $G_m$ as shown in Figure~\ref{fig:dag-eg}(b).\\

\begin{figure}
\centering
\includegraphics[width=50mm,keepaspectratio]{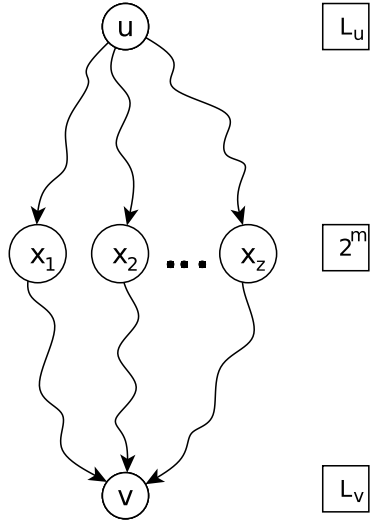}
\caption{Shortest path from $u$ to $v$ passing through $x$}
\end{figure} 

\begin{theorem}
\label{thrm:main}
For finding shortest distance from $u$ to $v$, assume that $u$ has 
topological level number $L_u$ and $v$ has topological level number 
$L_v$ in $G_m$. We define
\begin{equation}
\label{eq1}
    n = \argmax_{i} ( L_u \leq 2^i \leq  L_v)
\end{equation}

Now, if there is a shortest path from $u$ to $v$, for each shortest path, 
exclusively, one of the following is true.\\

Case 1: No intermediate node $x$ i.e. $I^{out}_{u}$ includes $v$ or 
$I^{in}_{v}$ includes $u$. \\

Case 2: There is an intermediate node $x$, and \[ topo(x) = 2^n + C \]

\hspace{13mm} for some constant offset C.
\end{theorem}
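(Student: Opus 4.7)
The plan is to prove the theorem by induction on the compression iterations of Algorithm~\ref{alg1}, leveraging Lemma~\ref{lemma1} and Lemma~\ref{lemma2} as the central tools. Case~1 follows immediately from the construction: if the shortest $u$-$v$ path is realised by a single edge (possibly a dummy edge) in some modified compressed graph $G^i_m$, then the iteration that processes the corresponding odd-topology endpoint inserts the tuple $\langle v, \delta(u,v)\rangle$ directly into $I^{out}_u$ (or symmetrically $\langle u, \delta(u,v)\rangle$ into $I^{in}_v$), with no intermediate vertex required.

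For Case~2, the first step is to identify the unique compression iteration at which entries bridging the $u$-side and the $v$-side of the index first appear. Since $n = \argmax_i (L_u \leq 2^i \leq L_v)$ is the exponent of the largest power of two sandwiched between $L_u$ and $L_v$ in $G_m$, Lemma~\ref{lemma1} guarantees that any vertex of $G_m$ at level $2^n$ becomes a level-$1$ (and therefore odd-topology) vertex in $G^n$, and hence in $G^n_m$. When Algorithm~\ref{alg1} reaches the iteration that processes $G^n_m$, this pivot vertex $x$ is scanned as one of the odd-topology vertices, and its incident edges seed index entries connecting vertices above and below level $2^n$. The subsequent iterations on $G^{n-1}_m, \ldots, G_m$ then relay these entries downward through the recursive insertion subroutine, eventually loading matching tuples $\langle x, d_1\rangle \in I^{out}_u$ and $\langle x, d_2\rangle \in I^{in}_v$ with $d_1 + d_2 = \delta(u,v)$.

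Two subtleties remain in pinning down the topology of $x$ as claimed. First, the exact crossing vertex on the chosen shortest path need not be an original vertex at level $2^n$; it may instead be a fictitious or copied surrogate introduced during modification, whose topology differs from $2^n$ by a small additive amount. Lemma~\ref{lemma2} constrains such an $x$ to have even topological level in $G_m$, which in the worst case allows an offset of magnitude at most~$2$; this is the content of the constant $C$ in the statement. Second, and this is the main obstacle, one must prove uniqueness of the pivot level for the given path: no vertex at $2^i + C'$ with $i \neq n$ can serve as the intermediate hub in $I^{out}_u \cap I^{in}_v$. I would argue this by observing that any candidate pivot at a smaller $2^i$ would already have been absorbed into $x$'s own index entries during an earlier (more compressed) iteration and so would not appear independently in the intersection, while any candidate at a larger $2^i$ lies outside the level range strictly between $L_u$ and $L_v$ and hence cannot be an interior vertex of a $u$-$v$ path in a DAG. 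Formalising this interleaving between compression levels and recursive propagation, and accounting carefully for the fictitious/copied substitutions, is the delicate bookkeeping the rigorous proof must complete.
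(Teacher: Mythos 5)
Your overall skeleton (induction, Lemma~\ref{lemma1} to locate the pivot level, Lemma~\ref{lemma2} to restrict index \emph{values} to even levels, propagation via \proc{RecursiveInsert}) matches the paper, but there are two genuine gaps. The more serious one is your reading of the constant offset $C$. In the paper $C$ is \emph{not} a small correction (magnitude at most $2$) absorbing fictitious/copied surrogates; it is invoked precisely when equation~\eqref{eq1} has \emph{no} solution, i.e.\ when no power of two lies between $L_u$ and $L_v$, and it is defined by equation~\eqref{eq2} as $C = 2^{n_{low}}$ with $n_{low} = \argmax_i (2^i < L_u)$, after which the levels are shifted ($L^m_u = L_u - C$, $L^m_v = L_v - C$) and the zero-offset argument is reused. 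Your proposal never treats the case where \eqref{eq1} has no maximizer, which is exactly the nonzero-offset case, and your ``at most $2$'' claim is false: for $L_u = 5$, $L_v = 7$ (the paper's $n$-to-$r$ example) the hub sits at level $6 = 2^1 + 4$, so $C = 4$, and $C$ can be arbitrarily large. The fictitious/copied-vertex issue you fold into $C$ is handled separately in the paper, by the observation that when a multi-level edge skips level $2^n$ the fictitious/copied node created for it stands in as the level-$2^n$ hub; it has nothing to do with the value of $C$.

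The second gap is the seeding mechanism for Case~2. A vertex $x$ with $topo(x) = 2^n$ in $G_m$ is at level $1$ in $G^n$, hence odd-topology there; in that iteration it acts only as a \emph{key}, building $I^{out}_x$ and $I^{in}_x$, and by Lemma~\ref{lemma2} an odd-topology vertex of the current graph is never inserted as a \emph{value}. So ``$x$ is scanned as an odd-topology vertex in $G^n_m$ and its incident edges seed entries'' cannot by itself put the tuple $\langle x, \cdot\rangle$ into anyone's list. The tuple $\langle x,\cdot\rangle$ first appears as a value in the iteration on $G^{n-1}_m$, where $x$ sits at level $2$ and the keys are the level-$2^{n-1}$ vertices (level $1$ there) adjacent to it; this is exactly the step the paper's induction isolates (``keys from level $2^d$, values from level $2^{d+1}$''), with \proc{RecursiveInsert} then relaying $\langle x,\cdot\rangle$ down to $u$ and, symmetrically, to $v$. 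Your downstream-relay sentence is fine, but the induction should be anchored at that iteration, not at $G^n_m$. Finally, the ``uniqueness of the pivot level'' you flag as the main obstacle is not something the theorem or the paper's proof requires: it suffices to exhibit a common vertex at level $2^n + C$ in both $I^{out}_u$ and $I^{in}_v$ realizing the shortest distance, since the query takes the minimum over the (possibly larger) intersection.
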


\begin{proof}
We prove this theorem using mathematical induction on $n$.\\

\textbf{Base case:} $n=1$. If there is a direct edge from $u$ to $v$ 
then \textit{case 1} is true because if $L_u = 2$ then $I^{in}_{v}$ includes $u$ 
or if $L_v =2 $ then $I^{out}_{u}$ includes $v$. If there is an intermediate node 
$x$ then $L_u = 1$ and $L_v=3$, hence $topo(x) = 2$ that shows \textit{case 2} is true. 
From Lemma~\ref{lemma2} both $I^{out}_{u}$ and $I^{in}_{v}$ include the node $x$ 
if there is a path from $u$ to $v$. In this case constant offset \textit{C} would be zero.\\

\textbf{Induction hypothesis:} Here we assume that for $n = d$ given theorem is true.\\

\textbf{Induction step:} We want to prove, for $n=d+1$ given theorem is true. 

If there is no intermediate node $x$ then \textit{case 1} is true. Hence, we discuss the 
only scenario where there is an intermediate node $x$ and we want to show that $x$ is 
in both $I^{out}_{u}$ and $I^{in}_{v}$. We sub-divide the proof for zero and non-zero 
values of constant offset \textit{C}.\\

\noindent \emph{Constant offset \textit{C} is zero:}

If there are $2^{d+1}$ levels, then compression step would be conducted at least
one more time than $2^d$ levels. From Lemma~\ref{lemma1} at the $d$'th step of 
compression, nodes at topological level $2^d$ in graph $G_m$ are at $1^{st}$ 
topological level in $G^{d}$ and nodes from topological level $2^{d+1}$ would be 
at $2^{nd}$ topological level.

Hence, \alg\ will build index for \textit{keys} (nodes) from topological level
$2^d$ in graph $G_m$, and those index \textit{values} include nodes from
topological level $2^{d+1}$. As our method recursively includes already built
index \textit{values}, the nodes from topological level $2^{d+1}$ would be
recursively included to corresponding outgoing index \textit{values} for
\textit{keys} at lower compression levels. Hence, if $topo(x) = 2^{d+1}$ then it
is present in outgoing index value of $u$.

The similar argument works for incoming index of $v$.\\

\noindent \emph{Constant offset \textit{C} is non-zero:}

If we cannot find $n$ that satisfies equation~\ref{eq1} then constant offset 
\textit{C} is non-zero. In this case offset can be calculated as :
\begin{equation}
\label{eq2}
	\begin{split}
		C &= 2^{n_{low}} \\
		where, \qquad 
	    n_{low} &= \argmax_{i}  ( 2^i <  L_u )
    \end{split}
\end{equation}

Now, we define modified topological level number of $u$ is $L^m_u$, where 
$ L^m_u = L_u - C$ and similarly modified topological level number of $v$  
is $L^m_v = L_v - C$. We use $L^m_u$ and $L^m_v$ in equation~\ref{eq1} to get 
$n$ \[ n = \argmax_{i} ( L^m_u \leq 2^i \leq  L^m_v) \]

Now, with $topo(x) = 2^{n} + C$, argument works similarly as zero offset.

\end{proof}

Example of non-zero offset \textit{C}: In Figure~\ref{fig:dag-eg}(b), we want to know
the shortest distance from $n$ to $r$ where corresponding topological levels
are $L_n = 5$ and $L_r = 7$ respectively. For this, we can not find any $n$
that satisfies the equation~\ref{eq1}. From equation~\ref{eq2}, we can
calculate $n_{low}=2$, using which modified topological levels $L^m_n
=1(5-2^2)$ and $L^m_r=3$ can be obtained. From $L^m_n$ and $L^m_r$ we get
$m=1$. Now, $2^1 + 2^2 = 6$ is the topological level of intermediate node $p$, 
which is present in both $I^{out}_{n}$ and $I^{in}_{r}$ (Table~\ref{Table:Index-result}). 
If we look carefully $L^m_n$ and $L^m_r$ is a base case in the mathematical
induction proof of Theorem~\ref{thrm:main}, and $L_n(5)$, $L_r(7)$ with offset \textit{C}
behave exactly the same as the base case.\\

\textbf{Note:} If there is no node from topological level $2^n$ in the shortest path from $u$
to $v$, then there must be one multilevel edge which skips that level. For a
node incident to that multilevel edge, at some step of the compression, we need
to prepare fictitious/copied node. That new fictitious/copied node works as a
node from topological level $2^n$ and will be included in both $I^{out}_u$ and
$I^{in}_v$. Thus, theorem works fine for this case.

For example, as depicted in Figure~\ref{fig:dag-eg}(b),
shortest path from $a$ to $r$ doesn't pass through any node from topological
level $4(2^2)$ in $G_m$, but it has a multilevel edge $(e,r_1)$. In  $G^2_m$ 
(Figure~\ref{fig:comgraph}(b)), this edge causes a fictitious node $e'$ at
topological level $2$ which is (logically) topological level $4$ in $G_m$. The
resulting index in Table~\ref{Table:Index-result} shows that, the node $e'(e)$
is included as a \textit{value} in the incoming index of $r$ ($I^{in}_r$) and
also included in  $I^{out}_a$.

\begin{figure*}[t]
\centering
    \begin{subfigure}[h]{0.5\textwidth}
	\centering
		\includegraphics[width=70mm,keepaspectratio]{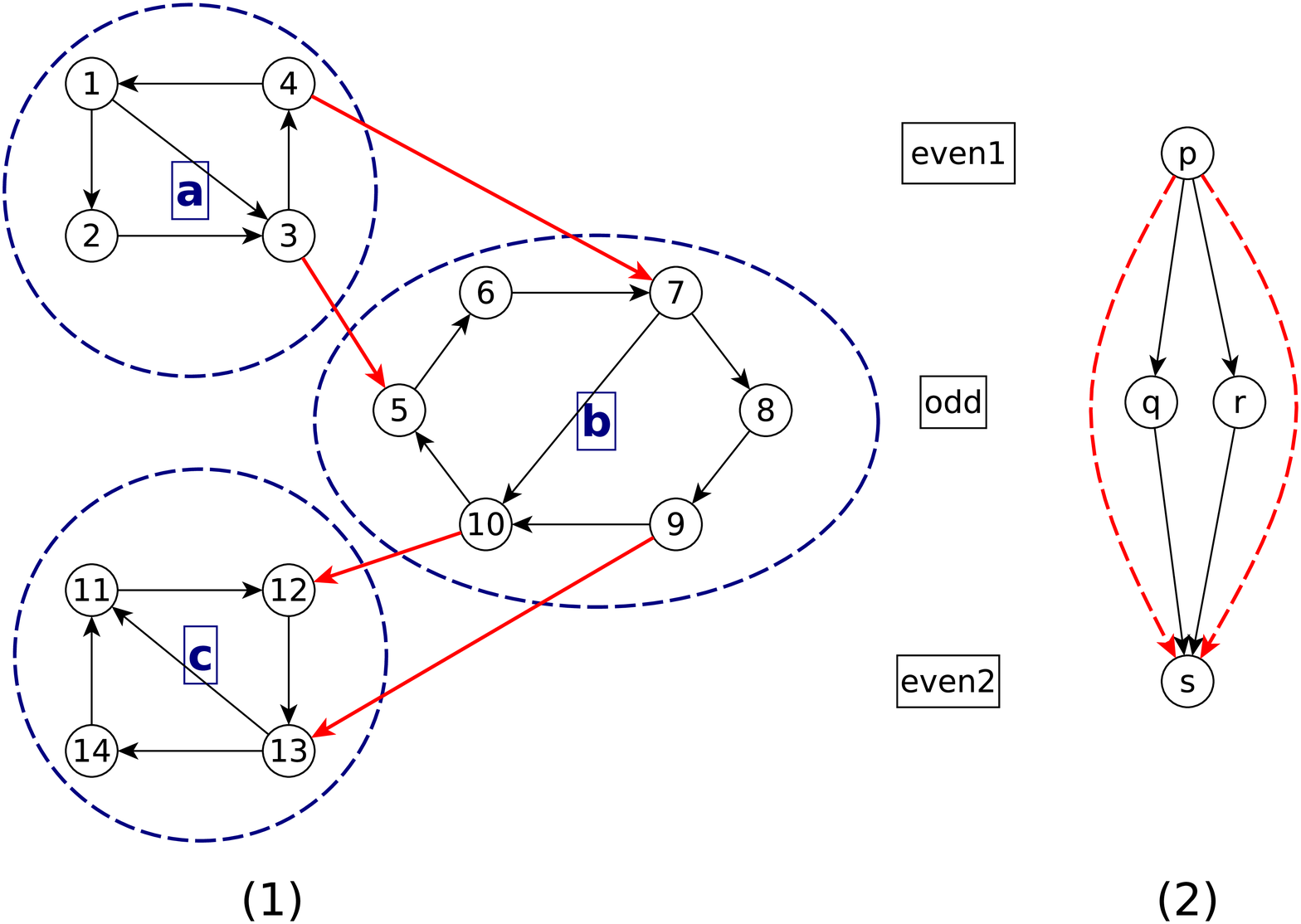}
		\caption{(1)Example for Distance within Middle DAG node of $G_{d}$ and (2)Example of multiple dummy edges in modified $G_{d}$}
		\label{fig:scc}
    \end{subfigure}%
    ~
    \begin{subfigure}[h]{0.5\textwidth}
	\centering

		\includegraphics[width=60mm,keepaspectratio]{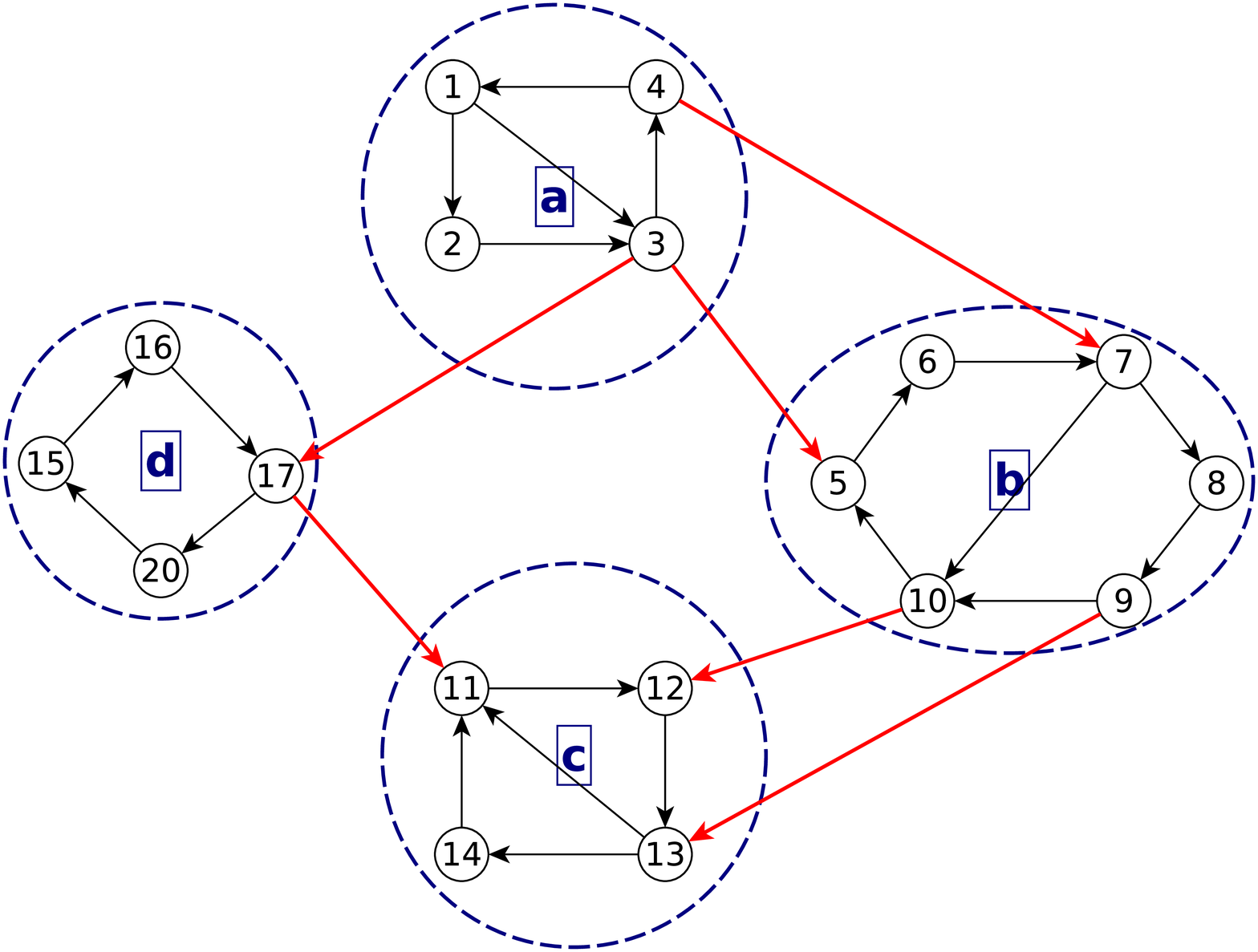}
		\caption{Example of merging multiple dummy DAG edges}
		\label{Figure:merging dummy}
    \end{subfigure}
    \vspace{-0.1in}
	\caption{Dummy edge handling}	

    \hfill

\end{figure*}

\section{Indexing for general directed graph}
\label{sec:generalized-directed-graph}

Any directed graph $G$ can be converted to a Directed Acyclic Graph (DAG)
$G_d$, by considering each strongly connected component (SCC) of $G$ as a node
of $G_d$. Thus in DAG, the edges within a SCC are collapsed within the
corresponding node. However, if an edge in $G$ connects two vertices from two
distinct SCCs, in $G_d$ those SCCs are connected by a DAG edge. To build the
shortest path index for a general directed graph, \alg\ first uses Tarjan's
algorithm \cite{DFS:Tarjan:1972} to convert $G$ to a DAG $G_d$ by finding all
SCCs of $G$. It also maintains a necessary data structure that keeps the
mapping from a DAG node to a set of graph vertices, and vice-versa. We call
this a parent-child mapping, i.e., a DAG node is the parent of graph vertices
which are part of the corresponding SCC. A DAG edge connects two vertices, one
from a distinct SCC. We call such vertices terminal vertices. A single DAG edge
between a pair of SCCs may encapsulate multiple paths (one edge or multiple
edges) of Graph $G$ such that the end vertices of these paths are terminal
vertices in those pair of SCCs. \alg's DAG edge data structure contains a set
of tuples, each representing one of these paths. A tuple has three elements:
node-id of start terminal vertex, node-id of end terminal vertex, and the
distance between these two vertices in $G$.  For example consider
Figure~\ref{fig:scc}(1), a DAG edge $(a,b)$ stores $\lbrace (3,5,1),(4, 7, 1),
(3,7,2), (4,5,3) \rbrace$, first two tuples represent a single-edge path, but
the last two represent multi-edge paths. 3, 4 are terminal vertices of DAG node
$a$, and 5,7 are terminal vertices of DAG node $b$.  For each tuple, the third
field stores the shortest distance between the pair of terminal vertices in the
first two fields of that tuple.  To compute distance between an arbitrary pair
of vertices within an SCC, \alg\ also pre-computes all-pair shortest paths
among all graph nodes belonging to a single SCC and store them in shortest path
index.  In real-life directed networks, the size of SCCs are generally not very large,
so storing all-pair distances within a SCC in the shortest path index is feasible.

\subsection{Distance for dummy edges:}
\label{sec:dummyDist}

For an unweighted DAG two consecutive edges yield a distance value of 2, but
for DAG which is a compressed representation of a general unweighted directed graph, 
two consecutive DAG edges may constitute an arbitrary distance value. This is due
to the fact that the shortest path may visit a large number of vertices which
are part of the start, middle, and end SCC. For an example, see
Figure~\ref{fig:scc}(1); in this figure, the rectangles ``even1", ``odd" and
``even2" are the topological level numbers; $a$, $b$ and $c$ are the DAG nodes 
(ellipses), and nodes with numeric ids are nodes of $G$.  Two consecutive DAG
edges are $(a, b)$ and $(b, c)$ connecting SCCs $a$, $b$ and SCCs $b$, $c$,
respectively. The shortest distance between $a$ and $c$ depends on the terminal
nodes of $a$ and $c$ that are being used.  If terminal node of $a$ is $3$ and
terminal node of $c$ is 13, the distance is 5, following the path
$3--4--7--10--12--13$. In this path, besides the distance 2 over the DAG edges,
there are three within-SCC edges, one in each of SCCs.  Thus, the total
distance for a dummy edge is the sum of (i) distance from a terminal vertex of
staring SCC to a terminal vertex in the middle SCC, (ii) distance between a
pair of terminal vertices in the middle SCC, and (iii) distance from a terminal
node in the middle SCC to  a terminal node in the end SCC. To account for this,
\alg\ computes the dummy edge distance by considering all possible combination
of terminal nodes in each SCCs.\\
\begin{table*}
\centering
\caption{Real world datasets and basic information}
\label{table:dataset}
\hspace{-0.5in}
\resizebox{1.1\textwidth}{!}{
\begin{tabular}{l c c c c c c c c c} 
\toprule
Name & $\vert$ V $\vert$ & $\vert$ E $\vert$ & AD & MD & $\vert$ $V_{DAG}$ $\vert$ & $\vert$ $E_{DAG}$ $\vert$ & $AD_{DAG}$ & $MD_{DAG}$ & Largest SCC\\ 
\midrule
AS\_Caida & 26,374 & 2,304,095 & 87.36 & 2,205,805 & 	26,358 & 48,958 & 1.86 & 2606 & 8\\ 

Email\_Eu & 265,214 & 420,045 & 1.58 & 7,636 & 231,000 & 223,004 & 0.97 & 168,815 & 34,203\\ 

Epinion & 49,289 & 487,183 & 9.88 & 2,631 & 16,264 & 16,497 & 1.01 & 15,789 & 32,417\\ 

Gnutella09 & 8,114 & 26,013 & 3.21 & 102 & 5,491 & 6,495 & 1.18 & 5,147 & 2,624\\ 

Gnutella31 & 62,586 & 147,892 & 2.36 & 95 & 48,438 & 55,349 & 1.14 & 43,928 & 14,149\\ 


WikiVote & 7,116 & 103,689 & 14.57 & 1,167 & 5,817 & 19,540 & 3.36 & 4869 & 1,300\\ 
\bottomrule
\end{tabular}
}
\end{table*}

\noindent {\bf Example:}  Say, \alg\ wants to find dummy DAG edge $a$ to $c$ which would
be set of tuple $\lbrace (3,12,*), (3,13,*), (4,12,*), (4,13,*) \rbrace $, (all
sources to all destinations) where $*$ represents the shortest \textit{distance}
values that it needs to find. To find the distance from node $3$ to $13$, it
finds the distance for all combinations of terminal nodes in the middle SCCs
and takes the minimum.  From starting SCC to middle SCC ($\delta(3,5) = 1$ and
$\delta(3,7) = 2$), within middle SCC ($\delta(5,9) = 4$, $\delta(5,10) = 3$,
$\delta(7,9) = 2$ and $\delta(7,10) = 1$) and finally from middle SCC to end
SCC ($\delta(9,13) = 1$, $\delta(10,13) = 2$).  In this example,
$\delta(3,7)+\delta(7,10)+\delta(10,13)$ gives the minimum value $5$ which
generates the tuple $(3,13,5)$.  Distance for all other tuples are also
calculated similarly.\\

\noindent \textbf{Multiple dummy edges:}
Another issue is, there could be multiple dummy edges having same starting and
ending DAG nodes as shown in Figures~\ref{fig:scc}(2).  If the original graph
itself is a DAG, \alg\ considers the dummy edge with the lowest
\textit{distance}.  But for converted DAG $G_d$ applying this solution is more
complex, because distance within middle SCC can be different for different
SCCs.  For this, we need to merge all possible tuples of all dummy edges, and
recalculate the distances by taking the minimum distance from the merged set of
tuples.

\noindent {\bf Example} in Figure~\ref{Figure:merging dummy} we extend the
example of Figure~\ref{fig:scc}(1) with one more DAG node $d$, which also
connects node $a$ to node $c$.  Dummy edge through the middle node $d$ is
$\lbrace(3,11,2), (4, 11, 4), (3,12,3), (4,12,5), (3,13,4), (4,13,6)\rbrace $,
and through the middle node $b$ is
$\lbrace(3,11,6), (4,11, 5), (3,12,4), (4,12,3), (3,13,5),(4,13,4)\rbrace $.  For
dummy edge $(a,c)$, we combine both the sets of tuples and obtain the smallest
distance.  Thus the final representation of dummy edge $(a,c)$ is the
following: $\lbrace(3,11,2), (4,11,4), (3,12,3), (4,12,3), (3,13,4),
(4,13,4)\rbrace $.

\subsection{Modification in index and query processing}
\label{sec:modified_query}
\begin{table}
\centering
\caption{Average Query Time for DAG ($\mu$s)}
\label{table:results}
\begin{tabular}{l c c c c} 
\toprule
{Name} &  \textbf{TopCom} & \textbf{IS-Label} & \textbf{Bi-Djk} &  \textbf{TreeMap \footnotemark[3]}\\ 
\midrule
AS\_Caida & 0.1036 & 0.2237 & 24.75 & 0.2471  \\ 
Email\_Eu & 0.1059 & 0.3865 & 1657.46 & 0.2674 \\ 
Epinion & 0.0360 & 0.2388 & 14.83 & 0.1722\\ 
Gnutella09 & 0.0345 & 0.3292 & 7.27 & 0.115 \\ 
Gnutella31 & 0.0752 & 0.2095 & 50.74 & 0.254\\ 
WikiVote & 0.1551 & 0.3494 & 43.11 & 0.2131\\ 
\bottomrule
\end{tabular}
\end{table}

\footnotetext[3]{Unweighted Graph results}

\alg\ Index for general graph stores the bidirectional mapping between the DAG
nodes and the vertices of the input graph. For every DAG edge (and also for DAG
dummy edges), it stores the set of tuple based representation that we have
discussed in the above subsection. It also stores all pair distance between
each of the vertices within an SCC. Above all, it prepares and stores the 2-hop
cover DAG indexes for the DAG representation of the input graph using the
methodologies that we discussed in Section~\ref{sec:method}.

For query processing, given a query $(u, v)$, \alg\ first identifies the
corresponding SSE nodes in the DAG using the bidirectional map. Say, these SSEs
are $s_u$ and $s_v$, respectively. If $s_u = s_v$, \alg\ simply uses the within
SSE all-pair index and return the distance between $u$ and $v$.  Otherwise, it
first finds the set of out-terminal SSE nodes of $s_u$ (say, $X$), and
in-terminal SSE nodes of $s_v$ (say, $Y$). Then it uses the 2-hop cover
indexing for finding the shortest path distance between each pair of
nodes---one from $X$, and the other from $Y$. It also considers within
SCC distances in three SCCs:
\textbf{Starting DAG node:} 
Distance from starting node of the query to start terminal node of the DAG node.
For example consider figure~\ref{fig:scc}(1), 
where query is $\delta(1,14)$. 
Now all outgoing edges from DAG $a$ are from nodes $3$ and $4$, 
hence we need to get distances from $1$ to $3$ and $4$ 
i.e. $\delta(1,3) = 1$ and $\delta(1,4) = 2$.\\
\textbf{Middle DAG node:} 
If there is a middle node (from the 2-hop cover index) then distance from incoming edge terminal 
node to outgoing edge terminal node 
within middle DAG node needs to be calculated.
In our example suppose $b$ is middle node,
then we need distance from $5$ to $9$ and $10$, 
i.e. $\delta(5,9)=4$, $\delta(5,10)=3$ and similar for node $7$.\\
\textbf{Ending DAG node:} 
Distance from end terminal node to ending node of the query within the DAG node.:w
That means in our example distance from $12$ and $13$ to $14$ 
i.e. $\delta(12,14)=2$ and $\delta(13,14)=1$.

Here again our task is to get minimum distance among all, and we use the
similar strategy as section~\ref{sec:dummyDist}, which is to minimize the
summation of above three distances along with edge distances.

As we can see, \alg's principle indexing process works with DAG and it performs 
well on real world datasets (Table~\ref{table:IS-cmp}). One reason is, real 
world complex graph becomes less complex when converted as DAG. For example, 
average degree of DAG $AD_{DAG}$ (Table~\ref{table:dataset}) are always smaller 
than $AD$, mostly an order of magnitude smaller and specifically for $AS\_Caida$ 
average degree reduced to $1.86$ from $87.36$, which is a decrease of almost two 
orders of magnitude. However because of DAG, we also need to handle a challenging 
task, i.e. maintaining distance information within SCC for each DAG node. To keep 
this information, most common ways are to maintain distance matrix or to keep set 
of edges and calculate distance at run time. Both of these methods have their own 
pros and cons; keeping distance matrix is the fastest access method but the size 
of SCC leads to space limitation i.e. we need space for $O(n^2)$ elements and for 
huge SCC this may be a notable problem. On the other hand keeping set of edges is 
a memory efficient way, however all distance finding algorithms are polynomial time 
in terms of $\vert V\vert$ and $\vert E\vert$; and for huge SCC, finding distance
between nodes at run time would be much slower. This represents a well known
phenomenon in Computer Science called space-time trade-off. Here for our task,
time gets priority over space, hence we selected first method, where we are
maintaining distance matrix for each DAG node. For large SCC, this may take high 
memory, however we observed that our index is still not very large for 
contemporary machine.
\begin{table*}
\centering
\caption{Average Query Time for General Graph ($\mu$s)}
\label{table:IS-cmp}
\begin{tabular}{l c c c|c c} 
\toprule
{Name} &  \textbf{TopCom} & \textbf{IS-Label} & \textbf{Bi-Djk} &  \textbf{TreeMap \footnotemark[3]} & \textbf{TopCom \footnotemark[4]} \\ 
\midrule
AS\_Caida &  0.26282 & 0.2229 & 25.9614 & 0.3873 & 0.26044 \\ 
Email\_Eu  & 12.4708 & 21.98527 & 1482.41 & 0.8102 & 10.45136 \\ 
Epinion &  34.6582 & 2114.02 & 3570.8667 & 5.727 & 33.1907 \\ 
Gnutella09 & 1.3405 & 8.0429 &  110.6521 & 1.6255 & 1.29084 \\ 
Gnutella31 & 2.46202 & 13.9999 & 299.423 & 5.804 & 2.44672 \\ 
WikiVote & 18.3593 &  23.72954 &183.4193 & 8.371 & 19.06744 \\ 
\bottomrule
\end{tabular}
\end{table*}

\footnotetext[4]{Unweighted Graph: Avg. over 5 times execution for 10K queries}

\subsection{Correctness revisited}
In this Section~\ref{sec:generalized-directed-graph}, we explain how to adopt the proposed 
indexing method for a general directed graph. For that, first we convert a general directed 
graph into DAG and then build index on the DAG. We describe the methods to maintain 
the information at both steps.

We should be able to calculate shortest distance from one node to any other node within the 
same DAG node. When we convert a general directed graph to DAG, we maintain this information by 
creating appropriate data structures during conversion (Section~\ref{sec:generalized-directed-graph}).

The index generation method for DAG is further divided into two steps: 
1) Topological Compression and 2) Index Generation. 
We need to maintain information only during the first step, because the index generation 
step only builds index from the graphs generated in the topological compression (first) step. 
The topological compression step is described in Section~\ref{sec:tc}, where DAG is 
compressed iteratively by removing all odd-topology vertices and incident vertices. 
This compression process maintains loss of information using dummy edges, to keep the 
correct information we explicitly handle distance information for dummy edges as explained 
in Section~\ref{sec:dummyDist}. 

Lastly, as the structure of a converted DAG is different, we need to handle the queries 
a little differently. We have explained the modification of query processing in 
Section~\ref{sec:modified_query}. Hence, all the required logical modifications are handled 
and \alg\ maintains correctness for general directed graph.

\section{Experimental evaluation}
\label{sec:experimental-evaluation}

We compare performance of \alg\ with two of the recent methods 
(IS-Label and TreeMap) for answering distance query. 
We also compare \alg\ with baseline method Bidirectional Dijkstra's algorithm, 
which is one of the fastest online methods for 
single source shortest distance queries. 
For both IS-Label and TreeMap, codes are provided by their authors. 
For these experiments we use a machine with Intel 2.4 GHz processor, 
8 GB RAM and Ubuntu 14.04 LTS OS. 
In \cite{AED:Xiang:2014} the author has claimed that TreeMap works for weighted directed graphs, 
however we are provided with the code of unweighted version for TreeMap, 
hence all comparisons with TreeMap are for unweighted graphs. 
Additionally as Y. Xiang mentioned in the paper, 
TreeMap needs huge memory if tree width is above threshold (1000).
The only dataset we are able to run using above machine is WikiVote.
Hence for comparison with TreeMap, we used machine with AMD 2.3 GHz processor, 
132 GB RAM and Red Hat Enterprise Server Release 6.6 OS. 
We also perform comparison to IS-Lable using same machine 
for two datasets (Email\_Eu and Epinion). 
Using synthetic graphs of different sizes and degrees, 
we show that TreeMap is not scalable for higher degree graphs. 
To generate these synthetic graphs we use python package 
networkx (procedure name, \proc{Fast\_gnp\_random\_graph()}).

\subsection{Datasets}

Here for our experiments, we used seven real world datasets 
(Table~\ref{table:dataset}) from different domains to show wide applicability
of \alg\ .  $\vert V \vert$ and $\vert E \vert$ are the number of vertices and
the number of edges respectively. Similarly $\vert V_{DAG} \vert$ and $\vert
E_{DAG} \vert$ are the number of vertices and the edges in the DAG of the
corresponding graph. \textit{AD} and $AD_{DAG}$ are average degree values for
the graph and its DAG counterpart, respectively.  \textit{MD} and $MD_{DAG}$
are maximum degrees i.e. maximum in or out degree in the graph and its DAG,
respectively.  \textit{Largest SCC} is a size of the biggest DAG node which
encapsulate the maximum number of input graph nodes.


We collected all datasets from SNAP (Stanford Network Analysis Project) web page\footnotemark[5] 
except $Epinion$ trust network dataset, 
which we collected from \cite{TAB:Massa:2006}.
$AS\_Caida$ is a business relationship network and 
$Email\_Eu$ is a snapshot of an email network generated by European Research Institute. 
$Epinion$ dataset is a trust network generated from social network users, 
it represents which user trusts whom. 
$Gnutella$ is a peer-to-peer file sharing network where $Gnutella09$ is 
a snapshot of the network on 9th August 2002 and $Gnutella31$ is 
a snapshot of the same network on 31st August 2002. 
$WikiVote$ is a network generated from Wikipedia admin voting history data.

\footnotetext[5]{http://snap.stanford.edu/data/index.html}

\subsection{Results and Discussion}
\label{sec:result-discussion}

\begin{figure*}[ht]
\centering
    \begin{subfigure}[ht]{0.5\textwidth}
	\centering
	\includegraphics[width=75mm,keepaspectratio]{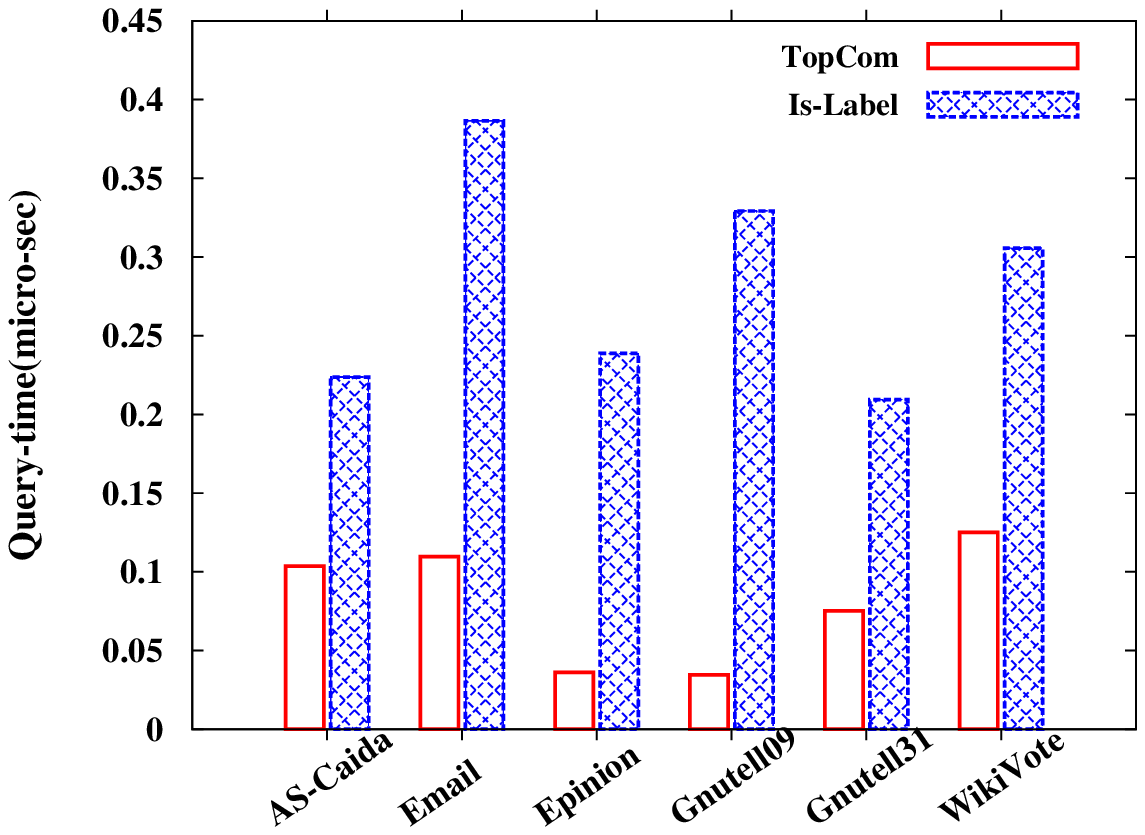}

	\caption{For DAG ($\mu$s)}	
	\label{fig:DAG-results}
    \end{subfigure}%
    ~
    \begin{subfigure}[ht]{0.5\textwidth}
	\centering

	\includegraphics[width=75mm,keepaspectratio]{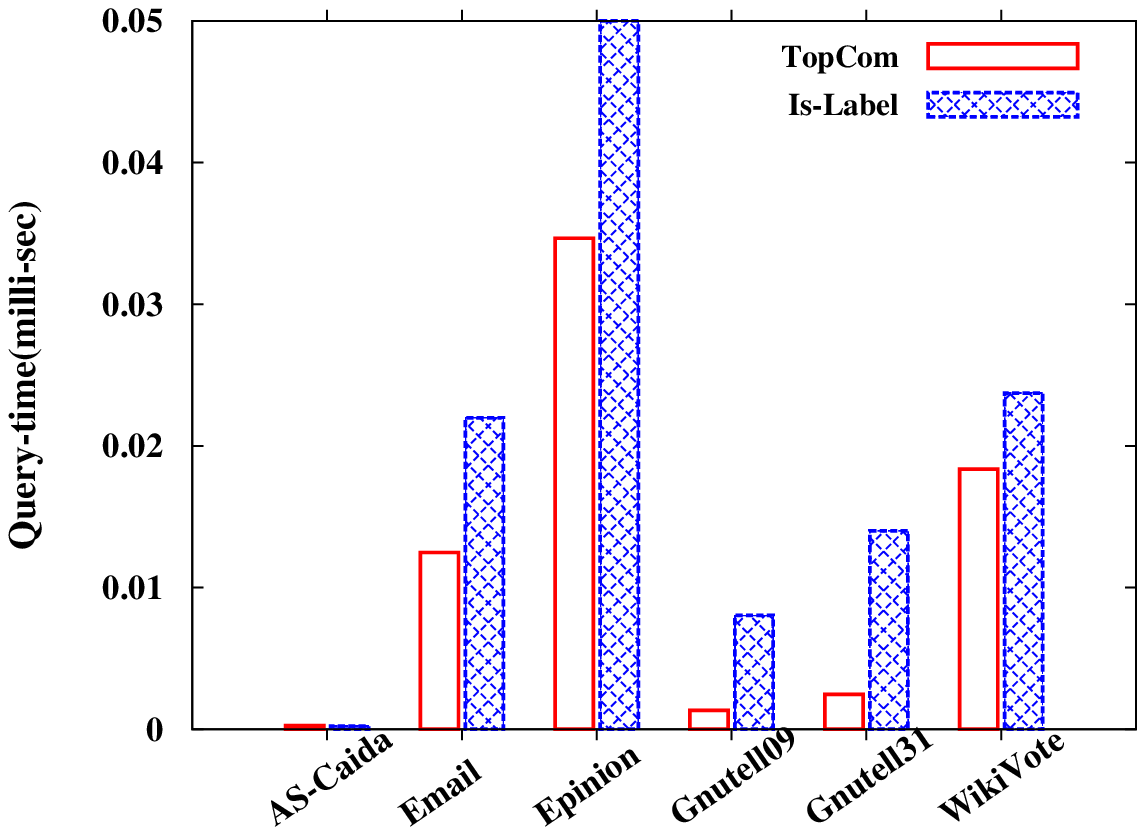}
	
	\caption{For General Graph (ms)}	
	\label{fig:GenGraph-results}
    \end{subfigure}
    	\vspace{-0.1in}
    \caption{Average Query time comparison}
    \hfill
\end{figure*}
\begin{figure*}[t!]
\centering
    \begin{subfigure}[t]{0.5\textwidth}
        \centering
		\includegraphics[width=75mm,keepaspectratio]{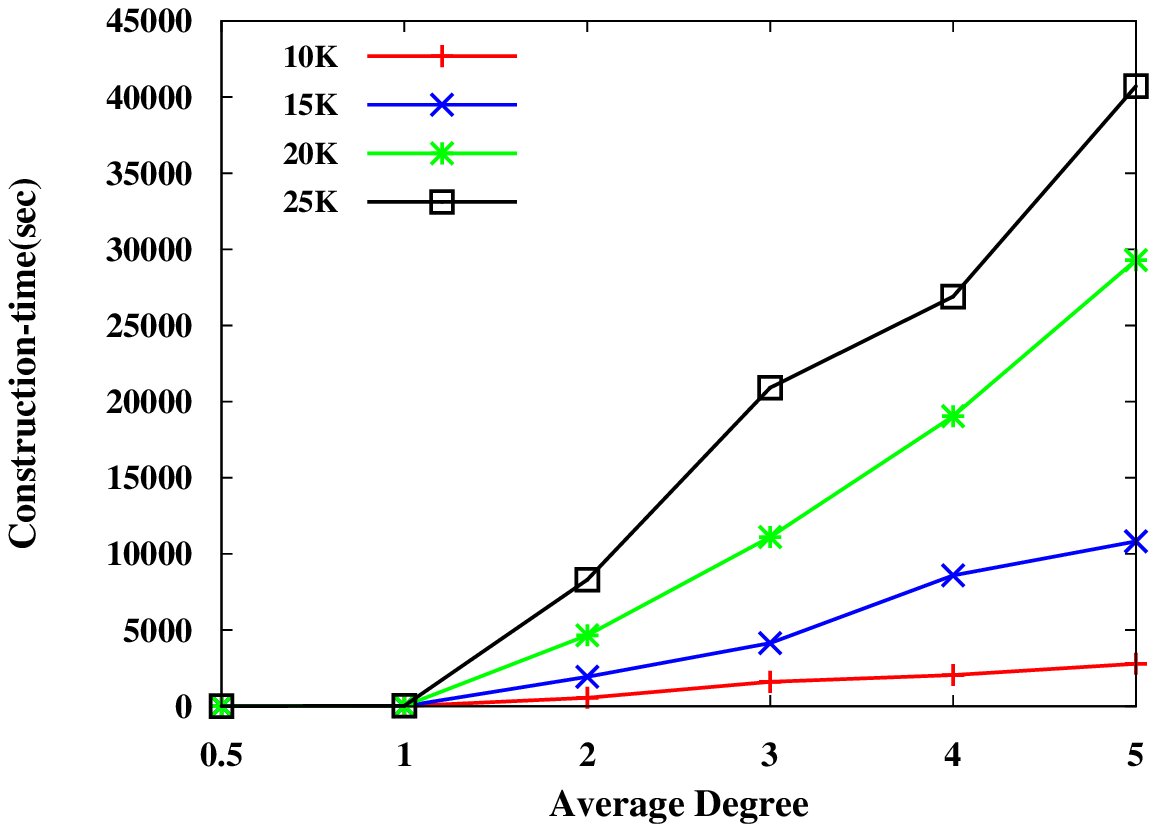}
        \caption{TreeMap Results on Synthetic Graphs}
        \label{fig:TreeMap building time}
    \end{subfigure}%
    ~
    \begin{subfigure}[t]{0.5\textwidth}
        \centering
		\includegraphics[width=75mm,keepaspectratio]{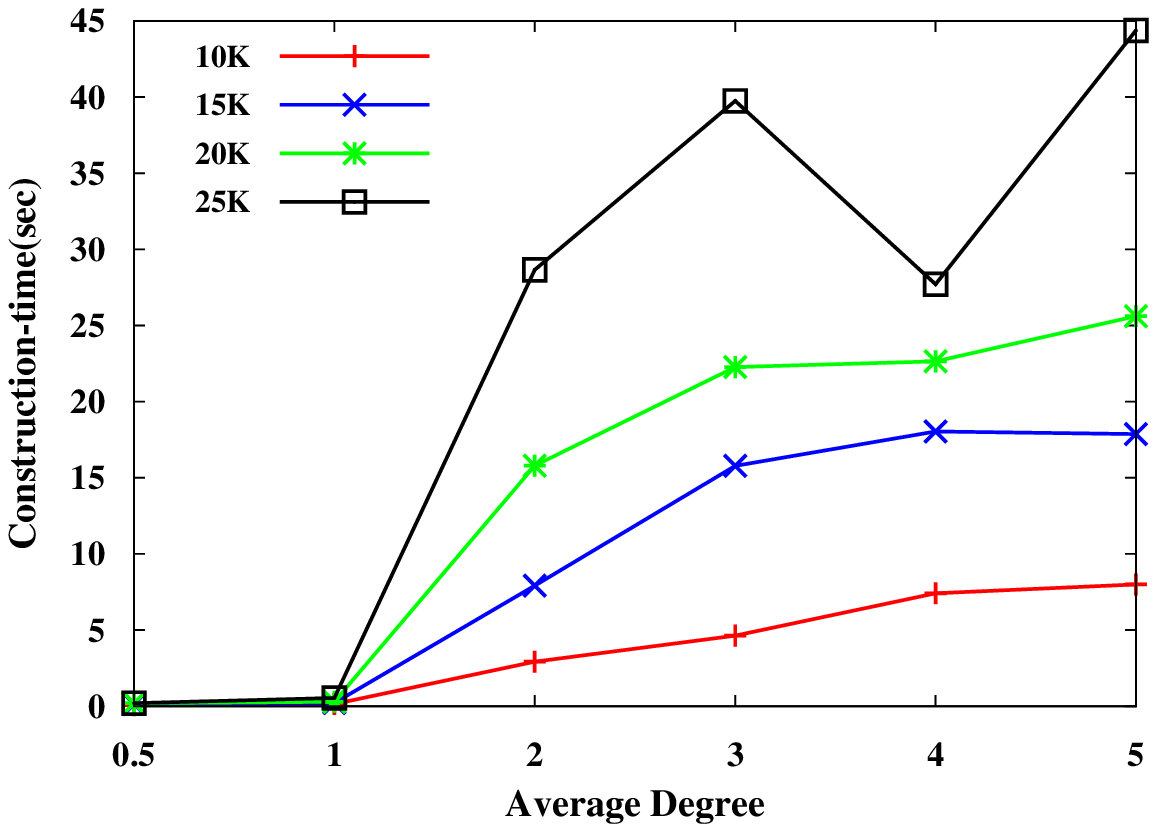}
        \caption{TopCom Results on Synthetic Graphs}
        \label{TopCom building time}
    \end{subfigure}
    \hfill
        	\vspace{-0.1in}
\caption{Index Building time for Synthetic graphs}
\end{figure*}

AS per expectation for DAG \alg\ outperforms IS-Label method for 
all datasets depicted in figure~\ref{fig:DAG-results}. 
Results are average query time over 10 times execution in micro-second ($\mu$s), 
where each method calculated 10K random queries in every execution. 
For more detailed comparison, if we look at table~\ref{table:results}, 
we can see that for datasets $Epinion, Gnutella09$ and $Gnutella31$ \alg\ 
outperforms IS-Label and TreeMap by an order of magnitude. 
For other datasets also \alg\ performs 2-3 times better than both of the competing methods. 
If we look at the Bi-Dijkstra results, \alg\ performs multiple orders of 
magnitude better for all datasets.
In figure~\ref{fig:GenGraph-results} results for general graphs are plotted, 
which clearly demonstrate superiority of \alg\ over IS-Label for general graph.
Here also we used Average Query time over 10 times execution
with each run calculating 10K random queries. 
Now if we look at table~\ref{table:IS-cmp} for dataset 
$AS\_Caida$ IS-Label performs better than \alg\, 
however the difference is $0.00003989$ $ms$ which is really very small. 
On the other hand \alg\ outperforms IS-Label for all other datasets. 
For $Gnutella31$ dataset \alg\ performs an order of magnitude better than IS-Label and 
surprisingly IS-Label performs really poor on $Epinion$ dataset such that \alg\ 
outperforms IS-Label by two orders of magnitude. 
For $Guntella09$ \alg\ performs almost 7 times better and 
for remaining datasets \alg\ performs almost two times better than IS-Label. 
Here once again \alg\ is multiple orders of magnitude faster than Bi-Dijkstra for all datasets.

Table~\ref{table:IS-cmp} shows result of TreeMap and \alg\ 
comparisons on unweighed graph. 
It is clear that \alg\ is competitively better in $AS\_cadia$, 
$Gnutella09$ and $Gnutella31$ datasets, but TreeMap performs an order of magnitude better 
for other three datasets. 
However, when we run the TreeMap for building index it took hours to 
build the index for some datasets, for example average index building time 
for $Epinion$ dataset is more than 9 hours, while $Gnutella31$ takes more than 26 hours. 
We believe one of the reasons is a bigger graph with higher average degree.
To find out the actual cause, we generated synthetic graphs of different sizes (10000-25000)
and degrees (0.5-5) and tried to build indexes using TreeMap. 
In figure~\ref{fig:TreeMap building time} the index building time 
is shown in seconds, after degree 1 all graphs started taking higher time 
and for bigger graphs the slope of the curve is very large. 
We compare construction time of \alg\ for the same set of synthetic graphs shown 
in figure~\ref{TopCom building time}, and as shown \alg\ hardly takes few 
seconds for index construction. 
Highest time taken by \alg\ is 44 sec for 25K nodes graph with average degree 5, 
which is almost thousand times faster compared to TreeMap. 
From this we can easily conclude that it is difficult for TreeMap to scale for higher degree graphs.

\section{Conclusions}
In this paper we proposed \alg\ : a unique indexing method to answer distance query 
for directed real-world graphs. 
This method uses topological ordering property of DAG and 
describes a novel method for distance preserving compression of DAG. 
We compared \alg\ with IS-Label and found the our method performs better than 
IS-Label for both weighted DAG and weighted general graph. 
We strongly believe our method should perform similar or better for unweighed graphs, 
because we store distance information in label irrespective of weighted/unweighted edges. 
We  do not compare \alg\ with other recent methods such as HCL \cite{AHC:Jin:2012} 
and state-of-art 2-Hop \cite{2Hop:Cohen:2002}, 
because Fu et al. have compared IS-Label with HCL and 
proved superiority of IS-Label in \cite{ISL:Fu:2013}. 
We also compare \alg\ with the recent TreeMap method, which performs better for some datasets,
however, we show that this method is not scalable for huge graphs with higher degree.
We plan to study further to build index for dynamic large graphs 
that can answer exact distance query in an acceptable time.

\bibliographystyle{ACM-Reference-Format-Journals}
\bibliography{acmsmall-sample-bibfile}


\begin{thebibliography}{00}


\ifx \showCODEN    \undefined \def \showCODEN     #1{\unskip}     \fi
\ifx \showDOI      \undefined \def \showDOI       #1{{\tt DOI:}\penalty0{#1}\ }
  \fi
\ifx \showISBNx    \undefined \def \showISBNx     #1{\unskip}     \fi
\ifx \showISBNxiii \undefined \def \showISBNxiii  #1{\unskip}     \fi
\ifx \showISSN     \undefined \def \showISSN      #1{\unskip}     \fi
\ifx \showLCCN     \undefined \def \showLCCN      #1{\unskip}     \fi
\ifx \shownote     \undefined \def \shownote      #1{#1}          \fi
\ifx \showarticletitle \undefined \def \showarticletitle #1{#1}   \fi
\ifx \showURL      \undefined \def \showURL       #1{#1}          \fi

\bibitem[\protect\citeauthoryear{Abraham, Delling, Goldberg, and
  Werneck}{Abraham et~al\mbox{.}}{2011}]%
        {HLA:Abraham:2011}
{Ittai Abraham}, {Daniel Delling}, {Andrew~V. Goldberg}, {and} {Renato~F.
  Werneck}. 2011.
\newblock \showarticletitle{A Hub-based Labeling Algorithm for Shortest Paths
  in Road Networks}. In {\em International Conference on Experimental
  Algorithms} {\em (SEA'11)}. 230--241.
\newblock


\bibitem[\protect\citeauthoryear{Abraham, Delling, Goldberg, and
  Werneck}{Abraham et~al\mbox{.}}{2012}]%
        {HHL:Abraham:2012}
{Ittai Abraham}, {Daniel Delling}, {Andrew~V. Goldberg}, {and} {Renato~F.
  Werneck}. 2012.
\newblock \showarticletitle{Hierarchical Hub Labelings for Shortest Paths}. In
  {\em Annual European Conference on Algorithms} {\em (ESA'12)}. 24--35.
\newblock


\bibitem[\protect\citeauthoryear{Akiba, Iwata, and Yoshida}{Akiba
  et~al\mbox{.}}{2013}]%
        {FES:Akiba:2013}
{Takuya Akiba}, {Yoichi Iwata}, {and} {Yuichi Yoshida}. 2013.
\newblock \showarticletitle{Fast Exact Shortest-path Distance Queries on Large
  Networks by Pruned Landmark Labeling}. In {\em ACM SIGMOD}. 349--360.
\newblock


\bibitem[\protect\citeauthoryear{Backstrom, Huttenlocher, Kleinberg, and
  Lan}{Backstrom et~al\mbox{.}}{2006}]%
        {GFL:Backstrom:2006}
{Lars Backstrom}, {Dan Huttenlocher}, {Jon Kleinberg}, {and} {Xiangyang Lan}.
  2006.
\newblock \showarticletitle{Group Formation in Large Social Networks:
  Membership, Growth, and Evolution} {\em (SIGMOD)}. 44--54.
\newblock


\bibitem[\protect\citeauthoryear{Bauer, Delling, Sanders, Schieferdecker,
  Schultes, and Wagner}{Bauer et~al\mbox{.}}{2010}]%
        {CHG:Bauer:2010}
{Reinhard Bauer}, {Daniel Delling}, {Peter Sanders}, {Dennis Schieferdecker},
  {Dominik Schultes}, {and} {Dorothea Wagner}. 2010.
\newblock \showarticletitle{Combining Hierarchical and Goal-directed Speed-up
  Techniques for Dijkstra's Algorithm}.
\newblock {\em J. Exp. Algorithmics\/}  {15}, Article 2.3 (March 2010), 1.21
  pages.
\newblock
\showISSN{1084-6654}


\bibitem[\protect\citeauthoryear{Bellman}{Bellman}{1958}]%
        {ORP:bellman:1958}
{Richard Bellman}. 1958.
\newblock \showarticletitle{{On a Routing Problem}}.
\newblock {\it Quart. Appl. Math.}  {16} (1958), 87--90.
\newblock


\bibitem[\protect\citeauthoryear{Cheng, Huang, Wu, and Fu}{Cheng
  et~al\mbox{.}}{2013}]%
        {TFL:Cheng:2013}
{James Cheng}, {Silu Huang}, {Huanhuan Wu}, {and} {Ada Wai-Chee Fu}. 2013.
\newblock \showarticletitle{TF-Label: A Topological-folding Labeling Scheme for
  Reachability Querying in a Large Graph} {\em (SIGMOD)}. 193--204.
\newblock


\bibitem[\protect\citeauthoryear{Cheng, Ke, Chu, and Cheng}{Cheng
  et~al\mbox{.}}{2012}]%
        {EPD:Cheng:2012}
{James Cheng}, {Yiping Ke}, {Shumo Chu}, {and} {Carter Cheng}. 2012.
\newblock \showarticletitle{Efficient Processing of Distance Queries in Large
  Graphs: A Vertex Cover Approach}. In {\em Proceedings of the 2012 ACM SIGMOD
  International Conference on Management of Data} {\em (SIGMOD '12)}. ACM, New
  York, NY, USA, 457--468.
\newblock
\showISBNx{978-1-4503-1247-9}


\bibitem[\protect\citeauthoryear{Cheng and Yu}{Cheng and Yu}{2009}]%
        {OLE:Cheng:2009}
{Jiefeng Cheng} {and} {Jeffrey~Xu Yu}. 2009.
\newblock \showarticletitle{On-line Exact Shortest Distance Query Processing}.
  In {\em Proceedings of the 12th International Conference on Extending
  Database Technology: Advances in Database Technology} {\em (EDBT '09)}. ACM,
  New York, NY, USA, 481--492.
\newblock
\showISBNx{978-1-60558-422-5}


\bibitem[\protect\citeauthoryear{Cohen, Halperin, Kaplan, and Zwick}{Cohen
  et~al\mbox{.}}{2002}]%
        {2Hop:Cohen:2002}
{Edith Cohen}, {Eran Halperin}, {Haim Kaplan}, {and} {Uri Zwick}. 2002.
\newblock \showarticletitle{Reachability and Distance Queries via 2-hop
  Labels}. In {\em Annual ACM-SIAM Symposium on Discrete Algorithms} {\em (SODA
  '02)}. 937--946.
\newblock


\bibitem[\protect\citeauthoryear{Cormen, Stein, Rivest, and Leiserson}{Cormen
  et~al\mbox{.}}{2001}]%
        {CSRL:2001}
{Thomas~H. Cormen}, {Clifford Stein}, {Ronald~L. Rivest}, {and} {Charles~E.
  Leiserson}. 2001.
\newblock {\em Introduction to Algorithms\/} (2nd ed.).
\newblock McGraw-Hill Higher Education.
\newblock
\showISBNx{0070131511}


\bibitem[\protect\citeauthoryear{Dave and Hasan}{Dave and Hasan}{2015}]%
        {TIS:dave:2015}
{VachikS. Dave} {and} {MohammadAl Hasan}. 2015.
\newblock \showarticletitle{TopCom: Index for Shortest Distance Query in
  Directed Graph}.
\newblock In {\em Database and Expert Systems Applications}. Lecture Notes in
  Computer Science, Vol. 9261. Springer International Publishing, 471--480.
\newblock


\bibitem[\protect\citeauthoryear{Erdem~Sariyuce, Kaya, Saule, and
  Catalyurek}{Erdem~Sariyuce et~al\mbox{.}}{2013}]%
        {IAN:Erdem:2013}
{A. Erdem~Sariyuce}, {K. Kaya}, {E. Saule}, {and} {U.V. Catalyurek}. 2013.
\newblock \showarticletitle{{Incremental Algorithms for Network Management and
  Analysis based on Closeness Centrality}}.
\newblock {\em ArXiv e-prints\/} (2013).
\newblock


\bibitem[\protect\citeauthoryear{Fu, Wu, Cheng, and Wong}{Fu
  et~al\mbox{.}}{2013}]%
        {ISL:Fu:2013}
{Ada Wai-Chee Fu}, {Huanhuan Wu}, {James Cheng}, {and} {Raymond Chi-Wing Wong}.
  2013.
\newblock \showarticletitle{IS-Label: An Independent-set Based Labeling Scheme
  for Point-to-point Distance Querying}.
\newblock {\em Proc. VLDB Endow.\/}  {6} (2013), 457--468.
\newblock
\showISSN{2150-8097}


\bibitem[\protect\citeauthoryear{Gao, Jin, Zhou, Yu, Jiang, and Wang}{Gao
  et~al\mbox{.}}{2011}]%
        {RAS:Gao:2011}
{Jun Gao}, {Ruoming Jin}, {Jiashuai Zhou}, {Jeffrey~Xu Yu}, {Xiao Jiang}, {and}
  {Tengjiao Wang}. 2011.
\newblock \showarticletitle{Relational Approach for Shortest Path Discovery
  over Large Graphs}.
\newblock {\em Proc. VLDB Endow.\/} {5}, 4 (Dec. 2011), 358--369.
\newblock
\showISSN{2150-8097}


\bibitem[\protect\citeauthoryear{Geisberger, Sanders, Schultes, and
  Delling}{Geisberger et~al\mbox{.}}{2008}]%
        {CHF:Geisberger:2008}
{Robert Geisberger}, {Peter Sanders}, {Dominik Schultes}, {and} {Daniel
  Delling}. 2008.
\newblock \showarticletitle{Contraction Hierarchies: Faster and Simpler
  Hierarchical Routing in Road Networks}. In {\em Proceedings of the 7th
  International Conference on Experimental Algorithms} {\em (WEA'08)}.
  Springer-Verlag, Berlin, Heidelberg, 319--333.
\newblock
\showISBNx{3-540-68548-0, 978-3-540-68548-7}


\bibitem[\protect\citeauthoryear{Gubichev, Bedathur, Seufert, and
  Weikum}{Gubichev et~al\mbox{.}}{2010}]%
        {FAE:Gubichev:2010}
{Andrey Gubichev}, {Srikanta Bedathur}, {Stephan Seufert}, {and} {Gerhard
  Weikum}. 2010.
\newblock \showarticletitle{Fast and Accurate Estimation of Shortest Paths in
  Large Graphs}. In {\em Proceedings of the 19th ACM International Conference
  on Information and Knowledge Management} {\em (CIKM '10)}. ACM, New York, NY,
  USA, 499--508.
\newblock
\showISBNx{978-1-4503-0099-5}


\bibitem[\protect\citeauthoryear{Hasan and Zaki}{Hasan and Zaki}{2011}]%
        {ASL:Hasan:2011}
{MohammadAl Hasan} {and} {MohammedJ. Zaki}. 2011.
\newblock \showarticletitle{A Survey of Link Prediction in Social Networks}.
\newblock In {\em Social Network Data Analytics}, {Charu~C. Aggarwal} (Ed.).
  Springer US, 243--275.
\newblock


\bibitem[\protect\citeauthoryear{Jiang, Fu, Wong, and Xu}{Jiang
  et~al\mbox{.}}{2014}]%
        {HDL:Jiang:2014}
{Minhao Jiang}, {Ada Wai-Chee Fu}, {Raymond Chi-Wing Wong}, {and} {Yanyan Xu}.
  2014.
\newblock \showarticletitle{Hop Doubling Label Indexing for Point-to-point
  Distance Querying on Scale-free Networks}.
\newblock {\em VLDB Endow.\/} {7}, 12 (Aug. 2014), 1203--1214.
\newblock
\showISSN{2150-8097}


\bibitem[\protect\citeauthoryear{Jin, Ruan, Dey, and Xu}{Jin
  et~al\mbox{.}}{2012a}]%
        {SSR:Jin:2012}
{Ruoming Jin}, {Ning Ruan}, {Saikat Dey}, {and} {Jeffrey~Yu Xu}. 2012a.
\newblock \showarticletitle{SCARAB: Scaling Reachability Computation on Large
  Graphs}. In {\em Proceedings of the 2012 ACM SIGMOD International Conference
  on Management of Data} {\em (SIGMOD '12)}. ACM, New York, NY, USA, 169--180.
\newblock
\showISBNx{978-1-4503-1247-9}


\bibitem[\protect\citeauthoryear{Jin, Ruan, Xiang, and Lee}{Jin
  et~al\mbox{.}}{2012b}]%
        {AHC:Jin:2012}
{Ruoming Jin}, {Ning Ruan}, {Yang Xiang}, {and} {Victor Lee}. 2012b.
\newblock \showarticletitle{A Highway-centric Labeling Approach for Answering
  Distance Queries on Large Sparse Graphs}. In {\em ACM SIGMOD} {\em (SIGMOD
  '12)}. 445--456.
\newblock


\bibitem[\protect\citeauthoryear{Jin, Xiang, Ruan, and Fuhry}{Jin
  et~al\mbox{.}}{2009}]%
        {3HOPP:Jin:2009}
{Ruoming Jin}, {Yang Xiang}, {Ning Ruan}, {and} {David Fuhry}. 2009.
\newblock \showarticletitle{3HOP: A High-compression Indexing Scheme for
  Reachability Query}. In {\em Proceedings of the 2009 ACM SIGMOD International
  Conference on Management of Data} {\em (SIGMOD '09)}. ACM, New York, NY, USA,
  813--826.
\newblock
\showISBNx{978-1-60558-551-2}


\bibitem[\protect\citeauthoryear{Jin, Xiang, Ruan, and Wang}{Jin
  et~al\mbox{.}}{2008}]%
        {EAR:Jin:2008}
{Ruoming Jin}, {Yang Xiang}, {Ning Ruan}, {and} {Haixun Wang}. 2008.
\newblock \showarticletitle{Efficiently Answering Reachability Queries on Very
  Large Directed Graphs}. In {\em Proceedings of the 2008 ACM SIGMOD
  International Conference on Management of Data} {\em (SIGMOD '08)}. ACM, New
  York, NY, USA, 595--608.
\newblock
\showISBNx{978-1-60558-102-6}


\bibitem[\protect\citeauthoryear{Jung and Pramanik}{Jung and Pramanik}{2002}]%
        {EPC:Jung:2002}
{Sungwon Jung} {and} {Sakti Pramanik}. 2002.
\newblock \showarticletitle{An Efficient Path Computation Model for
  Hierarchically Structured Topographical Road Maps}.
\newblock {\em IEEE Trans. on Knowl. and Data Eng.\/} {14}, 5 (Sept. 2002),
  1029--1046.
\newblock
\showISSN{1041-4347}


\bibitem[\protect\citeauthoryear{Kargar and An}{Kargar and An}{2011}]%
        {KSG:Kargar:2011}
{Mehdi Kargar} {and} {Aijun An}. 2011.
\newblock \showarticletitle{Keyword Search in Graphs: Finding R-cliques}.
\newblock {\em Proc. VLDB Endow.\/} {4}, 10 (July 2011), 681--692.
\newblock


\bibitem[\protect\citeauthoryear{Kempe, Kleinberg, and Tardos}{Kempe
  et~al\mbox{.}}{2003}]%
        {MSI:Kempe:2003}
{David Kempe}, {Jon Kleinberg}, {and} {\'{E}va Tardos}. 2003.
\newblock \showarticletitle{Maximizing the Spread of Influence Through a Social
  Network}. In {\em ACM SIGKDD} {\em (KDD '03)}. 137--146.
\newblock


\bibitem[\protect\citeauthoryear{Massa and Avesani}{Massa and Avesani}{2006}]%
        {TAB:Massa:2006}
{Paolo Massa} {and} {Paolo Avesani}. 2006.
\newblock \showarticletitle{Trust-aware bootstrapping of recommender systems}.
  In {\em ECAI Workshop on Recommender Systems}. 29--33.
\newblock


\bibitem[\protect\citeauthoryear{Okamoto, Chen, and Li}{Okamoto
  et~al\mbox{.}}{2008}]%
        {RCC:Preparata:2008}
{Kazuya Okamoto}, {Wei Chen}, {and} {Xiang-Yang Li}. 2008.
\newblock \showarticletitle{Ranking of Closeness Centrality for Large-Scale
  Social Networks}.
\newblock In {\em Frontiers in Algorithmics}. Lecture Notes in Computer
  Science, Vol. 5059. 186--195.
\newblock


\bibitem[\protect\citeauthoryear{Potamias, Bonchi, Castillo, and
  Gionis}{Potamias et~al\mbox{.}}{2009}]%
        {FSP:Potamias:2009}
{Michalis Potamias}, {Francesco Bonchi}, {Carlos Castillo}, {and} {Aristides
  Gionis}. 2009.
\newblock \showarticletitle{Fast Shortest Path Distance Estimation in Large
  Networks}. In {\em ACM CIKM} {\em (CIKM '09)}. 867--876.
\newblock


\bibitem[\protect\citeauthoryear{Qiao, Cheng, Chang, and Yu}{Qiao
  et~al\mbox{.}}{2014}]%
        {ASD:Qiao:2014}
{Miao Qiao}, {Hong Cheng}, {Lijun Chang}, {and} {J.X. Yu}. 2014.
\newblock \showarticletitle{Approximate Shortest Distance Computing: A
  Query-Dependent Local Landmark Scheme}.
\newblock {\em IEEE Transactions, Knowledge and Data Engineering\/} {26}, 1
  (Jan 2014), 55--68.
\newblock


\bibitem[\protect\citeauthoryear{Rice and Tsotras}{Rice and Tsotras}{2010}]%
        {GIR:Rice:2010}
{Michael Rice} {and} {Vassilis~J. Tsotras}. 2010.
\newblock \showarticletitle{Graph Indexing of Road Networks for Shortest Path
  Queries with Label Restrictions}.
\newblock {\em Proc. VLDB Endow.\/} {4}, 2 (Nov. 2010), 69--80.
\newblock
\showISSN{2150-8097}


\bibitem[\protect\citeauthoryear{Sanders and Schultes}{Sanders and
  Schultes}{2005}]%
        {HHH:Sanders:2005}
{Peter Sanders} {and} {Dominik Schultes}. 2005.
\newblock \showarticletitle{Highway Hierarchies Hasten Exact Shortest Path
  Queries}.
\newblock In {\em Algorithms – ESA 2005}, {GerthStølting Brodal} {and}
  {Stefano Leonardi} (Eds.). Lecture Notes in Computer Science, Vol. 3669.
  Springer Berlin Heidelberg, 568--579.
\newblock
\showISBNx{978-3-540-29118-3}
\showDOI{%
\url{http://dx.doi.org/10.1007/11561071_51}}


\bibitem[\protect\citeauthoryear{Sint and de~Champeaux}{Sint and
  de~Champeaux}{1977}]%
        {IBH:Sint:1977}
{Lenie Sint} {and} {Dennis de Champeaux}. 1977.
\newblock \showarticletitle{An Improved Bidirectional Heuristic Search
  Algorithm}.
\newblock {\em J. ACM\/} {24}, 2 (April 1977), 177--191.
\newblock
\showISSN{0004-5411}


\bibitem[\protect\citeauthoryear{Sommer}{Sommer}{2014}]%
        {SPQ:Sommer:2014}
{Christian Sommer}. 2014.
\newblock \showarticletitle{Shortest-path Queries in Static Networks}.
\newblock {\em ACM Comput. Surv.\/} {46}, 4, Article 45 (March 2014), 31 pages.
\newblock
\showISSN{0360-0300}


\bibitem[\protect\citeauthoryear{Tao, Sheng, and Pei}{Tao
  et~al\mbox{.}}{2011}]%
        {KSP:Tao:2011}
{Yufei Tao}, {Cheng Sheng}, {and} {Jian Pei}. 2011.
\newblock \showarticletitle{On K-skip Shortest Paths}. In {\em ACM SIGMOD}.
  421--432.
\newblock


\bibitem[\protect\citeauthoryear{Tarjan}{Tarjan}{1972}]%
        {DFS:Tarjan:1972}
{Robert Tarjan}. 1972.
\newblock \showarticletitle{Depth first search and linear graph algorithms}.
\newblock {\it SIAM J. Comput.} (1972).
\newblock


\bibitem[\protect\citeauthoryear{Tretyakov, Armas-Cervantes,
  Garc\'{\i}a-Ba\~{n}uelos, Vilo, and Dumas}{Tretyakov et~al\mbox{.}}{2011}]%
        {FFD:Tretyakov:2011}
{Konstantin Tretyakov}, {Abel Armas-Cervantes}, {Luciano
  Garc\'{\i}a-Ba\~{n}uelos}, {Jaak Vilo}, {and} {Marlon Dumas}. 2011.
\newblock \showarticletitle{Fast Fully Dynamic Landmark-based Estimation of
  Shortest Path Distances in Very Large Graphs}. In {\em ACM CIKM}. 1785--1794.
\newblock


\bibitem[\protect\citeauthoryear{Ukkonen, Castillo, Donato, and Gionis}{Ukkonen
  et~al\mbox{.}}{2008}]%
        {SWC:Ukkonen:2008}
{Antti Ukkonen}, {Carlos Castillo}, {Debora Donato}, {and} {Aristides Gionis}.
  2008.
\newblock \showarticletitle{Searching the Wikipedia with Contextual
  Information}. In {\em ACM CIKM} {\em (CIKM '08)}. 1351--1352.
\newblock


\bibitem[\protect\citeauthoryear{Wagner and Willhalm}{Wagner and
  Willhalm}{2007}]%
        {SUT:Wanger:2007}
{Dorothea Wagner} {and} {Thomas Willhalm}. 2007.
\newblock \showarticletitle{Speed-Up Techniques for Shortest-Path
  Computations}.
\newblock In {\em STACS 2007}, {Wolfgang Thomas} {and} {Pascal Weil} (Eds.).
  Lecture Notes in Computer Science, Vol. 4393. Springer Berlin Heidelberg,
  23--36.
\newblock
\showISBNx{978-3-540-70917-6}
\showDOI{%
\url{http://dx.doi.org/10.1007/978-3-540-70918-3_3}}


\bibitem[\protect\citeauthoryear{Wei}{Wei}{2010}]%
        {TES:Wei:2010}
{Fang Wei}. 2010.
\newblock \showarticletitle{TEDI: Efficient Shortest Path Query Answering on
  Graphs}. In {\em Proceedings of the 2010 ACM SIGMOD International Conference
  on Management of Data} {\em (SIGMOD '10)}. ACM, New York, NY, USA, 99--110.
\newblock
\showISBNx{978-1-4503-0032-2}


\bibitem[\protect\citeauthoryear{Xiang}{Xiang}{2014}]%
        {AED:Xiang:2014}
{Yang Xiang}. 2014.
\newblock \showarticletitle{Answering Exact Distance Queries on Real-world
  Graphs with Bounded Performance Guarantees}.
\newblock {\em The VLDB Journal\/} {23}, 5 (Oct. 2014), 677--695.
\newblock


\bibitem[\protect\citeauthoryear{Yan, Cheng, Ng, and Liu}{Yan
  et~al\mbox{.}}{2013}]%
        {FDP:Yan:2013}
{Da Yan}, {J. Cheng}, {W. Ng}, {and} {S. Liu}. 2013.
\newblock \showarticletitle{Finding distance-preserving subgraphs in large road
  networks}. In {\em ICDE}. 625--636.
\newblock


\bibitem[\protect\citeauthoryear{Yildirim, Chaoji, and Zaki}{Yildirim
  et~al\mbox{.}}{2012}]%
        {GRAIL:Hilmi:2012}
{Hilmi Yildirim}, {Vineet Chaoji}, {and} {MohammedJ. Zaki}. 2012.
\newblock \showarticletitle{GRAIL: a scalable index for reachability queries in
  very large graphs}.
\newblock {\em The VLDB Journal\/} {21}, 4 (2012), 509--534.
\newblock


\bibitem[\protect\citeauthoryear{Zhu, Lin, Wang, and Xiao}{Zhu
  et~al\mbox{.}}{2014}]%
        {RQL:Zhu:2014}
{Andy~Diwen Zhu}, {Wenqing Lin}, {Sibo Wang}, {and} {Xiaokui Xiao}. 2014.
\newblock \showarticletitle{Reachability Queries on Large Dynamic Graphs: A
  Total Order Approach}. In {\em ACM SIGMOD}. 1323--1334.
\newblock


\bibitem[\protect\citeauthoryear{Zhu, Ma, Xiao, Luo, Tang, and Zhou}{Zhu
  et~al\mbox{.}}{2013a}]%
        {SPD:Zhu2013}
{Andy~Diwen Zhu}, {Hui Ma}, {Xiaokui Xiao}, {Siqiang Luo}, {Youze Tang}, {and}
  {Shuigeng Zhou}. 2013a.
\newblock \showarticletitle{Shortest Path and Distance Queries on Road
  Networks: Towards Bridging Theory and Practice}. In {\em Proceedings of the
  2013 ACM SIGMOD International Conference on Management of Data} {\em (SIGMOD
  '13)}. 857--868.
\newblock


\bibitem[\protect\citeauthoryear{Zhu, Xiao, Wang, and Lin}{Zhu
  et~al\mbox{.}}{2013b}]%
        {ESS:Zhu:2013}
{Andy~Diwen Zhu}, {Xiaokui Xiao}, {Sibo Wang}, {and} {Wenqing Lin}. 2013b.
\newblock \showarticletitle{Efficient Single-source Shortest Path and Distance
  Queries on Large Graphs}. In {\em ACM SIGKDD}. 998--1006.
\newblock


\bibitem[\protect\citeauthoryear{Zwick}{Zwick}{2001}]%
        {EAD:Zwick:2001}
{Uri Zwick}. 2001.
\newblock \showarticletitle{Exact and Approximate Distances in Graphs — A
  Survey}.
\newblock In {\em Algorithms — ESA 2001}, {FriedhelmMeyer auf~der Heide}
  (Ed.). Lecture Notes in Computer Science, Vol. 2161. Springer Berlin
  Heidelberg, 33--48.
\newblock
\showISBNx{978-3-540-42493-2}
\showDOI{%
\url{http://dx.doi.org/10.1007/3-540-44676-1_3}}


\end{thebibliography}

\elecappendix

\medskip

This paper is an extension of the short paper published in an international conference~\cite{TIS:dave:2015}.
In this journal version paper, we have made considerable additions to the short paper, 
which are listed below:

\begin{enumerate}
\item We have added an important theoretical explanation with proof that our \alg\ is the 2-Hop indexing method, which is the most accepted indexing method from the last decade. We used the well known mathematical induction technique for this proof which is explained in the section~\ref{sec:proof}.

\item Our short paper version included indexing for only DAG (Directed Acyclic Graph). In the journal version we have extended our indexing method for arbitrary directed graphs. The description of the method is available in section~\ref{sec:generalized-directed-graph}.

\item In section~\ref{sec:experimental-evaluation} we have added experimental evaluation of the extended method for any directed graph that shows considerable improvement over state of the art methods.

\item We have also added experimental study to compare scalability of TopCom with TreeMap. The comparison is performed over graphs with $10000, 15000, 20000$ and $25000$ nodes and their average degrees being $0.5, 1 ,2, 3, 4$ and $5$. So $24$ graphs with different sizes and densities were used for checking the scalability of the indexing method. This comparative study shows that TopCom is significantly better than TreeMap for huge graphs. The description is available in section~\ref{sec:result-discussion}.

\item A newly added section~\ref{sec:related-works} is a related work section which has brief descriptions of the recent research on indexing. 

\end{enumerate}

We believe there is atleast 45-50\% additional content in the journal version as compared to the short paper.

\end{document}